\definecolor{darkblue}{RGB}{0,0,128}
\definecolor{lightgreen}{RGB}{128,255,128}
\definecolor{darkred}{RGB}{128,0,0}
\definecolor{textcol}{RGB}{128,0,0}
\definecolor{darkgreen}{RGB}{0,128,0}
\definecolor{lightyellow}{RGB}{255,255,128}
\definecolor{darkbrown}{RGB}{153,76,0}
\definecolor{lightblue}{RGB}{128,128,255}
\definecolor{boxblue}{RGB}{175,238,238}
\definecolor{lightbrown}{RGB}{204,153,153}
\colorlet{examplefill}{yellow!100!black}
\tikzset{every picture/.style={font issue=\large}, font issue/.style={execute at begin picture={#1\selectfont}}}
\definecolor{darkblue}{RGB}{0,0,128}
\definecolor{darkred}{RGB}{128,0,0}
\definecolor{darkbrown}{rgb}{0.375,0.25,0.125}
\newtheorem{theorem}{Theorem}
\newtheorem{lemma}[theorem]{Lemma}
\theoremstyle{definition}
\newtheorem{definition}{Definition}
\theoremstyle{remark}
\newtheorem{remark}{Remark}
\DeclareMathOperator{\nft}{Q}
\DeclareMathOperator{\inft}{INFT}
\DeclareMathOperator{\shift}{shift}
\newcommand{\nn}{\nonumber}
\newcommand{\vect}[1]{\mathbf{#1}}
\newcommand{\matd}[1]{\mathsf{#1}}
\newcommand{\const}[1]{{\mathcal{#1}}}
\newcommand{\Reals}{\mathbb{R}}
\newcommand{\Realsnn}{\Reals_{0}^{+}}
\newcommand{\Realsnp}{\Reals_{0}^{-}}
\newcommand{\Complex}{\mathbb{C}}
\newcommand{\inner}[2]{\langle #1,#2 \rangle}
\newcommand{\ie}{\emph{i.e.}}
\newcommand{\eg}{\emph{e.g.}}
\newcommand{\der}{\mathrm{d}}
\newcommand{\eqdef}{\stackrel{\Delta}{=}}
\newcommand{\snr}{\text{SNR}}
\providecommand{\norm}[1]{\left\lVert#1\right\rVert}
\newcommand{\E}{\mathsf{E}}
\newcommand{\ft}{\mathcal F}
\newcommand{\naturals}{\mathbb N}
\newcommand{\cev}[1]{\reflectbox{\ensuremath{\vec{\reflectbox{\ensuremath{#1}}}}}}
\newcommand{\normalc}[2]{\mathcal{N}_{\Complex}\!\left(#1,#2\right)}
\newcommand{\period}{\mathcal{T}}
\newcommand{\hadamard}{\circ}
\renewcommand{\angle}{{\rm{Arg}}}
\begin{document}

\thispagestyle{empty}

\title{Linear and Nonlinear Frequency-Division Multiplexing}

\author{Mansoor Yousefi and Xianhe Yangzhang
\thanks{
The material in this paper was presented in part at the 2016 European Conference and Exhibition on Optical Communications.
Mansoor Yousefi is with T\'el\'ecom Paris Tech, 75013 Paris, France (email: yousefi@telecom-paristech.fr). 
Xianhe Yangzhang is with the Department of Electrical and Electronic Engineering, University College London, WC1E 7JE London,
UK (e-mail: x.yangzhang@ucl.ac.uk).}%
}

\markboth{Yousefi and Yangzhang}{Linear and Nonlinear Frequency-Division Multiplexing}

\date{}

\maketitle

\IEEEpeerreviewmaketitle

\begin{abstract}
Two signal multiplexing schemes for optical fiber communication are considered:  
Wavelength-division multiplexing (WDM) and nonlinear 
frequency-division multiplexing (NFDM), based on the nonlinear Fourier
transform (NFT). Achievable information rates (AIRs) of NFDM and WDM are compared 
in a network scenario with an ideal lossless model of the
optical fiber in the defocusing regime. 
It is shown
that the NFDM AIR is greater than the WDM AIR subject to a bandwidth
and average power constraint, in a representative system with
one symbol per user. 
The improvement results from nonlinear signal multiplexing.

\end{abstract}

\begin{IEEEkeywords}
Optical fiber, nonlinear Fourier transform, nonlinear frequency-division
multiplexing, wavelength-division multiplexing. 
\end{IEEEkeywords}


\section{Introduction}

\IEEEPARstart{O}{ne} factor limiting data rates in optical communication  
is that linear multiplexing is applied to the nonlinear optical fiber. 
To address this limitation, nonlinear
frequency-division multiplexing (NFDM) was introduced \cite{yousefi2012nft1,yousefi2012nft2}, 
\cite[Sec.~II]{yousefi2012nft3}. NFDM is a signal multiplexing scheme based on the nonlinear Fourier 
transform (NFT), which represents a signal in terms of its discrete and continuous
nonlinear Fourier spectra. In NFDM users' signals are multiplexed in the nonlinear Fourier domain and propagate independently in a 
model of the optical fiber described by the lossless noiseless 
nonlinear Schr\"odinger (NLS) equation \cite{yousefi2012nft1,yousefi2012nft2,yousefi2012nft3}.

Prior work illustrates how NFDM is applied, with examples of achievable information rates (AIRs). 
However, an NFDM AIR higher than the
corresponding one in a linear multiplexing has not yet been 
demonstrated. In this paper we consider an optical fiber network with an 
integrable model of the optical fiber in the defocusing regime.  
The main contribution of the paper is showing that the AIR of NFDM 
is greater than the AIR of wavelength-division multiplexing (WDM) for a given 
bandwidth and average signal power, in a representative system with one symbol per user.

AIRs of WDM and NFDM with continuous spectrum modulation 
were compared in \cite[Sec.~V.~D, Fig. 9(b)]{yousefi2012nft3}. 
In this work, although the signal of each user is \emph{modulated} in the 
nonlinear Fourier domain, the signal of different users are \emph{multiplexed} 
linearly. The reason was that the computational complexity of the
inverse NFT, which is usually governed by integral equations, 
is high for data transmission. This made it difficult  to perform both nonlinear
modulation and multiplexing. 
As a consequence, NFDM expectedly achieved data rates approximately equal to WDM rates. 
It was explained that, to improve on WDM rates, one must consider a network 
scenario and multiplex users' signals nonlinearly \cite[Sec. 6.7. 1]{yousefiphdthesis}, 
\cite[Secs. II.C, V. D. and VI. A]{yousefi2012nft3}. 

To demonstrate NFDM high data rates, we first simplify the inverse NFT. 
Common approaches to the inverse NFT are based on integral equations, 
in sharp contrast to approaches to the forward NFT. The integral equations may also 
be cumbersome to solve, hindering 
the application of the NFT. 
In Section~\ref{sec:inverse-nft}, we interpret the inverse NFT as the dual of the 
forward NFT, as in the Fourier transform.
With this perspective, the forward and inverse NFT can be  computed using 
the forward and backward iteration of any integration scheme. This
allows us to 
compute the inverse NFT naturally by running backward the algorithms for the forward NFT described
in \cite{yousefi2012nft2}.
We compare the Boffetta-Osborne (BO) \cite{boffetta1992cds} and the Ablowitz-Ladik (AL) \cite{ablowitz2003dcn} 
integration schemes for the inverse NFT, and apply the AL scheme in Section~\ref{sec:simulations}.

Next we consider the NLS equation in the defocusing regime, which has several advantages.  
First, the operator $L$ in the Lax pair underlying the channel is self-adjoint. Consequently, solitons, the
less tractable part of the NFT, are naturally absent. Second, numerical algorithms are robust in this
regime. 
Third, the analyticity of the one of the nonlinear Fourier coefficients can be exploited to 
compute these coefficients from the NFT efficiently. 
Forth, it is easier to  maximize the spectral efficiency (SE) when the discrete spectrum is absent, as explained below.

In the finite blocklength communication, a guard interval is typically introduced 
between consecutive data blocks in the time domain. 
When the discrete spectrum is absent, 
the NFT is a one-dimensional function similar
to the Fourier transform. Consequently, the Nyquist-Shannon sampling theorem can be applied to
systematically modulate all 
degrees-of-freedoms (DoFs) in a finite nonlinear bandwidth --- namely, in the 
nonlinear Fourier domain, the signal consists of a train of raised-cosine 
functions as in \cite[Eq.~2 \& Sec.~V.~D]{yousefi2012nft3}. 
In contrast, since it is yet unclear how to represent bandlimited $N$-solitons methodically, 
in practice $N$-soliton transmission is limited to small $N$.
As a result, the ratio of the guard time to blocklength is smaller with the continuous spectrum modulation 
than that with the discrete spectrum modulation.


The paper is structured as follows.
The literature on data transmission using the NFT is abundant. In
Section~\ref{sec:related-work} we review this literature, 
highlight the state-of-the-art, and put the present paper in context.
In Section~\ref{sec:networks}  we explain the origin of the limitation of the conventional methods
in optical fiber networks. The forward and inverse NFT are 
presented in Section~\ref{sec:nft-review} and Section~\ref{sec:inverse-nft}, respectively. 
The theoretical base of NFDM was presented in 
\cite{yousefi2012nft1,yousefi2012nft2,yousefi2012nft3}. However, the theory simplifies considerably when
the discrete spectrum is absent. In Section \ref{sec:nfdm}, we present an abstract approach to 
nonlinear modulation and multiplexing using the continuous spectrum. Here, 
in contrast to the linear modulation and multiplexing that are based
on vector space representations, the signal space at the input
of the channel is not represented by a vector space.
AIRs of NFDM and WDM are numerically computed and 
compared in Section~\ref{sec:simulations}. Finally, the paper is concluded in
Section \ref{sec:conclusions}, followed
by Appendices~\ref{sec:A-B} and \ref{sec:kramers-kronig} which contain details.


\subsection*{Notation}

Real, non-negative real, non-positive real, natural and  complex  numbers are denoted by
$\Reals$, $\Realsnn$, $\Realsnp$, $\naturals$ and $\Complex$, respectively. 
The upper half complex plane is the open region 
$\Complex^+\eqdef\bigl\{ \lambda\in\Complex : \Im(\lambda)>0\bigr\}$.
Vectors are distinguished using the bold font, \eg, $\vect{x}\in\Reals^n$.
The Lebesgue space of functions
$f:\Reals\mapsto\Complex$ with finite $p-$norm $\norm{f}_{L^p(\Reals)}$ is represented by
$L^p(\Reals)$. The Hilbert space of
$\period$-periodic complex-valued functions with the inner product $\inner{f}{g}\eqdef\frac{1}{\period}\int_0^{\period}
f(t)g^*(t)\der t$ is shown by $L^2_p([0,\period])$, where $*$ is complex conjugate.
The probability distribution of a (zero-mean) complex circular-symmetric Gaussian random variable with  
variance $\sigma^2$ is denoted by $\normalc{0}{\sigma^2}$.
The Fourier transform operator is $\mathcal F$, defined with the convention in \cite[Eq. 1]{yousefi2014psd}. 
The Fourier transform and the NFT of a signal $q(t)$ are respectively denoted by 
$\mathcal Q(\omega)$, $\omega\eqdef 2\pi f\in\Reals$, and $Q(\lambda)$, $\lambda\in\Complex$.
By default, time, frequency and bandwidth are with respect to the Fourier
transform; the corresponding terms in the NFT are ``nonlinear time'', ``nonlinear frequency'' and 
``nonlinear bandwidth,'' that will be defined formally in Section~\ref{sec:nfdm}. 
Let $\mathcal H$ be a  complex Hilbert space with an orthonormal basis $(\phi_k)_{k\in\mathbb N}$.
We say $N$ is zero-mean Gaussian noise on $\mathcal H$ if $N=\sum_k 
N_k \phi_k$, where $N_k\sim~\normalc{0}{\sigma^2_k}$ is a sequence of independent random variables. 
To simplify the notation, sometimes we add or drop variables in
functions. For example, in Section~\ref{sec:inverse-nft}, $q(t) \eqdef
q(t,z)$ and
$a(\lambda)\eqdef a(\lambda,t)\eqdef a(\lambda, t, z)$.


\section{Transmission Using the NFT}
\label{sec:related-work}

\subsection{NFT in Optical Communication}
\label{sec:nft-optical}

The NFT has appeared in optical communication in assortment of places. 
The literature may be classified as approaches extending linear modulation to 
nonlinear modulation, and linear multiplexing to
nonlinear multiplexing. This division clarifies concepts and explains 
the way that the NFT could usefully be applied to communications.

\subsubsection{Nonlinear modulation}

\paragraph{Multi-soliton communication}
One of the triumphs of the nonlinear science is soliton theory. The NFT is 
the central tool for the analysis of solitons. 
The fundamental soliton communication with on-off keying modulation played a 
pivotal role in the early years of optical communication. 
The fundamental soliton communication was extended to eigenvalue communication in 
\cite{hasegawa1993ec}, in order to transmit more than one 
bit in the time duration of a fundamental 
soliton. The observation was that the eigenvalues are conserved 
in the channel, while the amplitude and phase change. It is thus natural to modulate the 
invariant quantities, such as eigenvalues. 

The 1-soliton communication can be systematically extended to modulating 
all DoFs in $N$-soliton communication \cite{feder2012,yousefi2012nft3,prilepsky2014nis,hari2016,dong2015}.
This extension may be viewed as generalizing the conventional linear modulation in digital
communication \cite[Chap. 6]{gallager2008dcom} to nonlinear modulation
in the optical fiber.
However, although the AIR of $N$-soliton communication is naturally higher than the AIR of 1-soliton 
communication, it is equal to the AIR of any other good signal
set with $N$ parameters. For instance, the widely-used pulse-amplitude modulation (PAM) with Nyquist pulse shape and 
an $N$-ary constellation achieves the same AIR.  
This is because the change of waveform in the channel is not a fundamental limitation in communications, 
since it is equalized at the receiver (as in the radio channels). Importantly, 
the same signal space may be generated by linear or nonlinear modulation.

In sum, nonlinear modulation (such as multi-soliton communication) does not have a
fundamental advantage in terms of the AIR over linear modulation (such as the standard PAM) in any 
channel, under optimal receiver. 
On the contrary, linear modulation with equalization is preferred in  nonlinear 
channels because it is simple.

\paragraph{Discrete and continuous spectrum modulation}
The multi-soliton communication can be generalized to discrete and
continuous spectrum modulation \cite{yousefi2012nft3,prilepsky2014nis,le2014nis,le2015reduced}. 
However, DoFs in time, frequency, nonlinear time, and nonlinear frequency are in
one-to-one correspondence. Thus, modulating both discrete and continuous
spectra does not achieve a data rate higher than the present AIRs 
in coherent systems. In fact, both spectra are indirectly fully modulated in, \eg, Nyquist-WDM
transmission.
Existing approaches are not improved fundamentally by replacing the
set of signal DoFs with another equivalent set.

\subsubsection{Nonlinear multiplexing}
Finally, the NFT can be applied for \emph{signal multiplexing}. For this purpose, 
NFDM was designed in \cite{yousefi2012nft1,yousefi2012nft2,yousefi2012nft3} in order to address the limitation that 
the fiber nonlinearity sets on the AIRs of linear multiplexing in optical networks. 
In this approach, add-drop multiplexers (ADMs) in the network are replaced with 
nonlinear multiplexers constructed using the NFT. Importantly, data rates higher than 
WDM rates may be achieved using this approach.

As noted, modulating both spectra within each user but subsequently 
performing linear multiplexing of users' signals will not achieve a data rate higher than the existing WDM rates. 
On the other hand, linearly modulating the signal of each user 
and performing nonlinear multiplexing of users' signals can improve
WDM rates. NFDM is an approach where modulation can be linear or nonlinear, but
multiplexing is nonlinear. 

Note that although the theoretical principle of the nonlinear multiplexing was presented in 
\cite{yousefi2012nft1,yousefi2012nft2,yousefi2012nft3}, the
simulations in \cite[Sec. V. D]{yousefi2012nft3} are 
essentially nonlinear modulation, combined with linear
multiplexing. It is the objective of the present paper to 
continue the simulations in \cite{yousefi2012nft3} but with nonlinear multiplexing.

\subsection{Review}

Data transmission using the NFT has been explored in numerous papers recently. 
We review some of these papers, highlighting the latest work. 

The NFT consists of a discrete and a continuous spectrum. The discrete spectrum with few nonlinear 
frequencies is modulated in
\cite{buelow2018ptl,buelow2016trans7,yousefi2012nft3,aref2015experimental,aref2016control,
hari2016,terauchi2013,span2017isit,zhang2017cor,garcia2018arxiv}, 
while the continuous spectrum is modulated in  \cite{yousefi2012nft3,prilepsky2014nis,le2014nis}.
The feasibility of the joint discrete and continuous spectrum modulation is 
demonstrated in \cite{aref2016ecoc,tavakkolnia2015sig,aref2018jlt1}.

Noise models for nonlinear frequencies and spectral amplitudes are developed in 
\cite{zhang2015isit,zhang2014sps,civelli2018decision,civelli2018ofc}. 
The probability distribution of the noise in the nonlinear Fourier domain 
is obtained in \cite{moustakas2008nse} for some special cases. The impact 
of noise and perturbations on NFT is studied in 
\cite{civelli2017noise,yangzhang2018jlt,buelow2018ptl,aref2017ecoc1,jones2018ofc}.

Computing the forward NFT is straightforward; several algorithms are 
proposed in \cite{yousefi2012nft2,boffetta1992cds}. 
The inverse NFT finds applications in the fiber Bragg grating design, where in this literature several 
layer-peeling algorithms for the inverse NFT \cite{bruckstein1985} are optimized 
\cite{skaar2003,rosenthal2003,buryak2009ist,civelli2015spawc}. 
Fast NFTs are studied in \cite{wahls2014fnft,wahls2015fast,wahls2015digital,vaibhav2017finft}, 
where here computing the NFT is generally reviewed.

Transmission based on the NFT is extended from single polarization to dual polarization for the 
discrete spectrum in \cite{gaiarin2017ofc,gaiarin2018optica} and for the 
continuous spectrum in \cite{goossens2017optexp}. The periodic NFT
is explored in \cite{kamalian2016} for communications, using 
signals with one or two DoFs for which the inverse NFT can be
computed analytically. 

AIRs of the discrete and continuous spectrum modulation are reported, respectively, in 
\cite{yousefi2012nft3,shevchenko2015,kaza2016isit,hari2016,buelow2018ptl} and 
\cite{yousefi2012nft3,tavakkolnia2017cap,turitsyn2016nature,derevyanko2016cap}.
A lower bound on the 
SE of multi-solitons is obtained  in \cite{kaza2016soliton}. However, these AIRs are obtained under strong
assumptions; among others, the memory is neglected. Furthermore, often these AIRs are obtained 
for point-to-point channels and cannot be compared with the AIRs of the WDM networks. 
Correct analytic calculation of the maximum AIR in the nonlinear Fourier domain is still an 
open problem.

Transmission based on the NFT has been demonstrated in experiments, mostly with low data rates  
\cite{aref2015experimental,buelow2015nonlinear,bulow2014experimental,bulow2015experimental,dong2015,
  maruta2015tiwdc,gui2018cleo}. Aref, B\"ulow and Le conducted a series of 
  experiments showing that one can modulate and demodulate nonlinear frequencies and spectral amplitudes
\cite{buelow2015nonlinear,bulow2014experimental,bulow2015experimental}. 
Recent experiments from this group include joint discrete and continuous spectrum modulation \cite{aref2016ecoc} and 
an experiment modulating 222 nonlinear frequencies in the continuous spectrum reaching 
2.3 bits/s/Hz and 125 Gbps ~\cite{le2017ofc,le2018jlt1}.

Finally, while this paper was under review, 
the main result of the paper that NFDM can outperform WDM 
in simplified models in the defocusing regime has been confirmed in the focusing regime 
\cite{yangzhang2017ofc,yangzhang2018jlt}, dual 
polarization transmission \cite{goossens2017optexp}, and non-ideal models with perturbations 
\cite{yangzhang2018jlt}. 
These related works are based on the methodology established in this paper.
However, a comprehensive comparison of WDM and NFDM in more general and realistic models 
is subject to research.


\section{Optical Fiber Networks}
\label{sec:networks}

In this section, we present the channel model 
and explain the origin of the limitation of the conventional 
communication methods in this model.

\subsection{Network Model}

\begin{figure*}[t]
\centering
\includegraphics[width=0.9\textwidth]{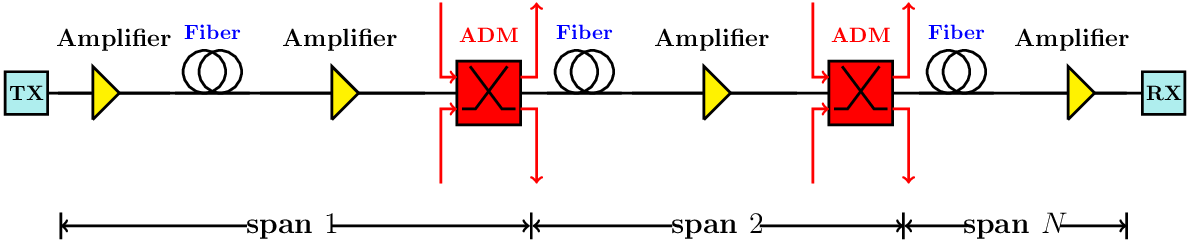}
\caption{Block diagram of a network environment.}
\label{fig:network}
\end{figure*}

The propagation of a signal in the single-mode single-polarization optical fiber 
can be modeled by the stochastic NLS
equation. 
The equation in the normalized form reads \cite[Eqs. 1--3]{yousefi2012nft1}
\begin{IEEEeqnarray}{rCl}
j\frac{\partial q}{\partial z}=\frac{\partial^2 q}{\partial t^2}- 2s|q|^2q+n(t,z),
\label{eq:nls-channel}
\end{IEEEeqnarray}
where $q(t,z):\Reals\times\Realsnn\mapsto\Complex$ is the complex envelope of the signal as a 
function of time $t$ and distance $z$ along the fiber (the transmitter is located at 
$z=0$; the receiver is located at $z=\const L$), $n(t,z)$ is (zero-mean) white circular symmetric 
complex Gaussian noise, and $j\eqdef\sqrt{-1}$. Here, $s\eqdef-1$ in the focusing regime
(corresponding to the standard fiber with anomalous dispersion)  and
$s\eqdef +1$ in the defocusing regime (corresponding to the fiber with normal dispersion).  

The NLS equation \eqref{eq:nls-channel} models the chromatic dispersion (captured by the term $\partial^2 q/\partial t^2$), 
Kerr nonlinearity (captured by the term $|q|^2q$), and noise (that
arises from distributed amplification along the
fiber). We assumed that the fiber 
loss is perfectly compensated by Raman amplification so that \eqref{eq:nls-channel} does
not contain a loss term. The model \eqref{eq:nls-channel} takes into account the
the three leading physical 
effects in fiber; we refer the reader to 
\cite{agrawal2007} for some higher-order effects that
are neglected in \eqref{eq:nls-channel}, and generally for modeling the optical fiber.

In this paper, we consider a \emph{network environment}. This refers
to a communication network 
with the following set of assumptions \cite[Sec.~II. B. 3]{yousefi2012nft3}: (1)
there are  multiple transmitter (TX) and receiver (RX) pairs;
(2) there are add-drop multiplexers (ADMs) in the network. The signal of the user-of-interest can co-propagate
with the signals of the other users in part of the link. The location
and the number of ADMs are unknown; (3) each TX and RX
pair does not know the incoming and outgoing signals in 
the path that connects them. A network environment is depicted in Fig.~\ref{fig:network}.

To make use of the available fiber bandwidth, data can be modulated in disjoint frequency bands. 
As a result of using more bandwidth, data rates, measured in 
bits per second, were rapidly increased with the advent of WDM a few decades ago. 
However, the SE of WDM, measured in bits/s/Hz, is low, vanishing 
with the input power \cite{splett}. From the point of view of communication
theory, WDM is a form of linear multiplexing, a 
fundamental concept forming the basis of the data transmission in most communication systems, including
the fiber-optic systems.

There is a vast literature on WDM AIRs, sometimes referred to as 
the ``nonlinear Shannon limit'' \cite{splett}; see \cite{essiambre2010clo} and references therein.
Fig.~\ref{fig:nft3d} (b) shows the AIR of the WDM as a function of the average 
input power in a network
environment. It can be seen that the AIR vanishes (or saturates in a modified
scheme \cite{agrell2015conds}) as the input power tends to infinity.
The roll-off of the rate with power has been attributed to several factors \cite{tang2001scc}, 
however there is consensus among researchers that the inter-channel interference arising from
nonlinearity is the primary factor
\cite{essiambre2010clo}, \cite[Sec.~II]{yousefi2012nft3}.


\subsection{Origin of the Limitation of the Conventional Methods} 
\label{sec:origin}

It was realized in the past few years that nonlinear interactions do
not limit the capacity in  the deterministic models of optical networks. 
These interactions arise from
\emph{methods of
  communication}, which disregard nonlinearity \cite{yousefi2012nft1,yousefi2012nft2,yousefi2012nft3}.
After abstracting away non-essential
aspects, current methods, in essence, apply linear modulation and multiplexing. 
The linear modulation schemes include PAM and pulse-train transmission. 
The linear multiplexing methods are WDM, orthogonal frequency-division multiplexing (OFDM), 
time-division multiplexing (TDM), polarization-division multiplexing (PDM) and 
space-division multiplexing (SDM).  
When linear multiplexing is applied to nonlinear channels, it gives rise to inter-channel 
interference and inter-symbol interference (ISI). 
In a network environment interference cannot be
removed, while intra-channel ISI can partially be compensated using signal processing. The idea of
sharing bandwidth in WDM, and integration in SDM, conflict with nonlinearity because of  
interactions among transmission modes. Since the nonlinearity is
fixed by physics, it was proposed to 
replace the approach \cite{yousefi2012nft1,yousefi2012nft2,yousefi2012nft3}.  

Yousefi and Kschischang recently proposed nonlinear frequency-division 
multiplexing which is fundamentally compatible with the channel \cite{yousefi2012nft1,yousefi2012nft2,yousefi2012nft3}.  
NFDM exploits a delicate structure in the channel model, in order  
to implement interference-free communication.  NFDM is based on the observation
that the NLS equation supports nonlinear Fourier ``modes'' which have 
an important property that they propagate independently in the
channel, the key to build a multi-user
system. The tool necessary to reveal independent signal DoFs is the NFT. 
Based on the NFT, NFDM was constructed  
which can be viewed as a
generalization of OFDM in linear channels to the nonlinear optical fiber. 
Exploiting the integrability property, NFDM modulates  non-interacting signal DoFs in the channel.


\section{Summary of NFDM}
\label{sec:nft-review}

In this section, we review NFDM from \cite{yousefi2012nft1,yousefi2012nft2,yousefi2012nft3}. 

Let $T: \mathcal H\mapsto\mathcal H$ be a compact (linear) map on a
separable complex Hilbert space $\mathcal H$ with the inner product $<, >$. Consider the channel 
 \begin{IEEEeqnarray}{rCl}
          Y \eqdef T (X)+N,
\label{eq:linear-channel}
\end{IEEEeqnarray}
where $X$ is the input signal, $Y$ is the output signal and $N$ is Gaussian 
noise on $\mathcal H$. 
The channel 
can be discretized by projecting signals and noise onto an orthonormal basis $(\phi_\lambda)_{\lambda\in\naturals}$ of $\mathcal H$
 \begin{IEEEeqnarray}{rCl}
 \bigl\{X, Y, N\bigr\}=\sum\limits_{\lambda=1}^\infty \bigl\{X_\lambda,
 Y_\lambda, N_\lambda\bigr\}\phi_\lambda, 
\end{IEEEeqnarray}
where $X_\lambda, Y_\lambda, N_\lambda\in\Complex$ are DoFs.
This results in a discrete model
 \begin{IEEEeqnarray}{rCl}
   Y_\lambda= H_{\lambda}X_\lambda+
   \underbrace{\sum\limits_{\mu\neq \lambda} H_{\lambda\mu}X_\mu}_{\color{black}\textnormal{linear
       interactions}}+N_\lambda,
\label{eq:lin-isi}
 \end{IEEEeqnarray}
where $H_{\lambda\mu}=\inner{T\phi_\mu}{\phi_\lambda}$, $\lambda\in\naturals$.
Depending on the choice of basis, interactions in \eqref{eq:lin-isi} could refer
to ISI in time, 
inter-channel interference in frequency, or generally interaction
among DoFs in any of the methods in
Fig.~\ref{fig:nft3d}(a). 

Suppose that $T$ is diagonalizable and has a set of eigenvectors forming 
an orthonormal basis of $\mathcal H$, \eg, when $T$ is self-adjoint \cite[Thm. 6]{yousefi2012nft1}. In this basis, interactions in \eqref{eq:lin-isi}
are zero and
 \begin{IEEEeqnarray}{rCl}
   Y_\lambda=H_\lambda N_\lambda+N_\lambda,
   \label{eq:scalar-channels}
 \end{IEEEeqnarray}
where $H_\lambda\eqdef H_{\lambda\lambda}$ is an eigenvalue of $T$. As a result, the channel is 
decomposed into parallel independent scalar channels for  $\lambda=1,2,\cdots$. 

Interactions in \eqref{eq:lin-isi} arise if the basis used for communication 
is not compatible with the channel. As a special case, let $\mathcal
H=L^2_p([0,\period])$ and $T$ be the
convolution map $T(X)\eqdef H(t)\convolution X(t)$, 
where $H(t)\in L^1(\Reals)$ is the channel filter and $\convolution$ denotes
convolution. The eigenvectors and eigenvalues of $T$ are 
\[
\phi_\lambda(t)=\frac{1}{\sqrt\period}\exp(-j \lambda \omega_0 t),\quad \omega_0\eqdef\frac{2\pi}{\period},
\]
and $H_\lambda\eqdef\ft(H)(\lambda\omega_0)$. The Fourier transform maps convolution to a
multiplication operator according to \eqref{eq:scalar-channels}, where
$X_\lambda$, $Y_\lambda$ and $N_\lambda$ are Fourier series coefficients.
Interference and ISI are absent in the Fourier basis. 
OFDM is a technology in which information is modulated in independent spectral 
amplitudes $X_\lambda$, $\lambda\in\mathbb N$.

\begin{figure*}[t]
\centerline{
\begin{tabular}{ccc}
\includegraphics[width=0.25\textwidth]{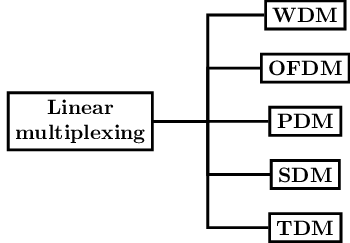}
&

\pgfplotstableread{
SNR AIR
9.2964 3.2875
15.317 4.9158
18.839 5.8781
21.338 6.6271
23.276 7.089
29.296 8.2859
35.317 8.7923
38.839 7.6133
43.276 4.9223
46.798 0.3556
}\wdmcap

\pgfplotstableread{
SNR AIR
9.2964 3.2875
15.317 4.9158
18.839 5.8781
21.338 6.6271
23.276 7.089
29.296 8.2859
35.317 8.7923
50 8.7923
}\wdmcapsaturated


\newcommand{\ptsize}{10pt}
\newcommand{\ptsizee}{8pt}

\tikzset{
  hatch distance/.store in=\hatchdistance,
  hatch distance=8pt,
  hatch thickness/.store in=\hatchthickness,
  hatch thickness=5pt,
  hatch color/.store in=\hatchcolor,
  hatch color=gray!20
}

\pgfdeclarepatternformonly[\hatchdistance,\hatchthickness]{thick vlines}
  {\pgfpointorigin}{\pgfqpoint{\hatchthickness}{100pt}}{\pgfqpoint{\hatchdistance}{100pt}}%
  {
  \pgfsetlinewidth{\hatchthickness}
  \pgfpathmoveto{\pgfqpoint{0.5pt}{0pt}}
  \pgfpathlineto{\pgfqpoint{0.5pt}{100pt}}
  \pgfusepath{stroke}
  }

\pgfdeclarepatterninherentlycolored[\hatchcolor]{crosshatch dots color}
{\pgfpointorigin}{\pgfpoint{\ptsize}{\ptsize}}
{\pgfpoint{\ptsizee}{\ptsizee}}
{
  \pgfsetfillcolor{\hatchcolor}
  \pgfpathrectangle{\pgfpointorigin}{\pgfpoint{8pt}{8pt}}
  \pgfusepath{fill}
  \pgfsetfillcolor{\hatchcolor}
  \pgfpathcircle{\pgfpoint{2pt}{1.75pt}}{20pt}
  \pgfpathcircle{\pgfpoint{6pt}{5.75pt}}{20pt}
  \pgfusepath{fill}
  \pgfsetfillcolor{pgf@darklightsteelblue!70}
  \pgfpathcircle{\pgfpoint{2pt}{2.25pt}}{0.4pt}
  \pgfpathcircle{\pgfpoint{6pt}{6.25pt}}{0.4pt}
  \pgfusepath{fill}
}


\begin{tikzpicture}
\begin{axis}[xmin=0,xmax=51,ymin=2.5,ymax=16,line width=1, font=\normalsize,
legend entries={\textcolor{darkblue}{upper bound}, modified lower bound, \textcolor{darkred}{lower bound}},
legend style={at={(0.715,1)}, draw=none, fill=none, font=\small,legend cell align=left},
xlabel={SNR [dB]},
ylabel={AIR [bits/2D]},
y label style={at={(0.1,0.5)}},
fill=none,
width=0.38\textwidth
]

\addplot[ name path=upperbound, smooth, color=darkblue, line width=1.3] coordinates{
(9.2964, 3.2875)
(50,14.5)
};


\addplot[ name path=lowerbound, color=black, smooth, mark=*, mark size=1pt, line width=1.3] table [y=AIR] {\wdmcapsaturated};


\addplot[color=darkred, smooth, dotted, line width=1.3, restrict x to domain=35:45] table [y=AIR] {\wdmcap};


\addplot[pattern=crosshatch dots color] fill between [of=lowerbound and upperbound];


\node at (210,60) { \color{darkblue}\begin{rotate}{41} $\log(1+\textnormal{SNR})$ \end{rotate} };
\node[draw=none, align=center] at (290,25){\textcolor{textcol}{nonlinearity}\\\textcolor{textcol}{impact}};
\draw[->, thick, densely dotted, color=black] (310,36)--(340,60);

\node at (425,85) {\large \color{black}?};

\end{axis}
\end{tikzpicture}
&
\hspace*{-3mm}
\includegraphics[width=0.4\textwidth]{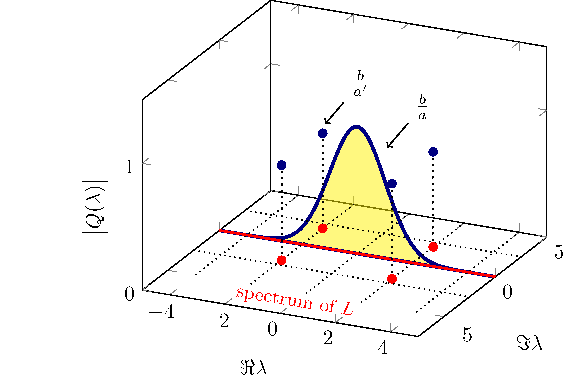} 
\\
~~~~~(a) & ~~~~~~~~(b)
\end{tabular}
}
\caption{(a) Linear multiplexing methods.  (b) Capacity bounds. The lower bound
is the WDM AIR \cite[Fig.~3]{yousefi2012nft3}. (c) The absolute value of the NFT as a surface on the complex
  plane.}
\label{fig:nft3d}
\end{figure*}

We explain NFDM in analogy with OFDM. First, we define the NFT as
follows. Consider the operator
 \begin{IEEEeqnarray}{rCl}
L\eqdef j\begin{pmatrix}
\dfrac{\partial}{\partial t} & -q(t) \\
sq^*(t) & -\dfrac{\partial}{\partial t}
\end{pmatrix},
\label{eq:L-operator}
\end{IEEEeqnarray}
where $q(t)\in L^1(\Reals)$ is the signal. Let
$\vect{v}(\lambda,t)\eqdef [v_1,v_2]^T$ be an eigenvector of $L$ corresponding to the eigenvalue 
$\lambda$, \ie, 
\begin{IEEEeqnarray}{rCl}
 L\vect{v}=\lambda \vect{v}.
 \label{eq:eigen}
\end{IEEEeqnarray}
The eigenvalues $\lambda$ of $L$ are called \emph{nonlinear frequencies}.
They are complex numbers whose real and imaginary
parts have physical significance \cite[Ex. 1]{yousefi2012nft3}.
Define the normalized eigenvector
\begin{IEEEeqnarray*}{rCl}
\vect{u}(t,\lambda)\eqdef \begin{pmatrix}
a(t,\lambda)\\
b(t,\lambda)
\end{pmatrix},
\end{IEEEeqnarray*}
where
\begin{IEEEeqnarray}{rCl}
a \eqdef e^{j\lambda t}v_1,\quad b\eqdef e^{-j\lambda t}v_2, 
\label{eq:ab-norm}
\end{IEEEeqnarray}
with the initial condition 
$\vect{u}(-\infty,\lambda)=\begin{pmatrix} 1,0\end{pmatrix}^T$.
The nonlinear Fourier coefficients are
$a(\lambda)\eqdef a(\infty,\lambda)$ and $b(\lambda)\eqdef
b(\infty,\lambda)$. The value of the NFT at the nonlinear frequency
$\lambda\in\Reals$, called the \emph{spectral amplitude}, is
$b(\lambda)/a(\lambda)$. 

If $\lambda$ is a simple eigenvalue in the upper half complex
plane $\Complex^+$, it can be shown that $a(\lambda)=0$, so that the first term in the Taylor expansion 
of $a(\zeta)$ around $\zeta=\lambda$
vanishes  \cite[Sec.~IV. B]{yousefi2012nft1}. As a result, the spectral amplitude for 
$\lambda\in\mathbb C^+$ is $b/a'$, where prime denotes
differentiation, because 
$b/a$ is integrated out to its residue $b/a'$ \cite[App.~F]{yousefi2012nft1}. It can be shown that $a(\lambda)$ is an analytic
function of $\lambda$ in $\Complex^+$ \cite[Lem.~4]{yousefi2012nft1}. Thus, nonlinear frequencies in
$\Complex^+$ consist of the discrete set  of (simple) zeros of 
$a(\lambda)$ denoted by $(\lambda_i)_{i\in\mathcal N}$, where $\mathcal N\eqdef\{1,\cdots, N\}$, $N\in\mathbb N$. If $s=1$, the operator $L$ is self-adjoint; 
thus $\mathcal N$ is empty and nonlinear frequencies are real. 

\begin{definition}[Nonlinear Fourier Transform]
The NFT of $q(t)\in L^1(\Reals)$ with respect to the $L$ operator \eqref{eq:L-operator} is a function $Q(\lambda)$ of the 
complex frequency $\lambda\in\Complex$, defined as
  \begin{IEEEeqnarray*}{rCl}
    Q(\lambda) \eqdef
\begin{cases}
\dfrac{b(\lambda)}{a(\lambda)}, & \lambda\in\Reals,\\[1em]
\dfrac{b(\lambda_i)}{a'(\lambda_i)},& \lambda_i\in\Complex^+,\quad
i\in\mathcal N.\\
\end{cases}
  \end{IEEEeqnarray*}
The functions $\hat q(\lambda) \eqdef Q(\lambda)$, 
$\lambda\in\Reals$, and  $\tilde q(\lambda_i) \eqdef Q(\lambda_i)$, $\lambda_i\in\Complex^+$,  
are called, respectively, the continuous and discrete spectrum. 

\qed
\end{definition}
 
Let $Q(\lambda,z)$ be the NFT of $q(t,z)$ with respect to $t$. 
The important property of the NFT is that, if $q(t,z)$ propagates in \eqref{eq:nls-channel}  
with noise set to zero,  we have
\begin{IEEEeqnarray}{rCl}
Q(\lambda,\const L)=H(\lambda,\const L)Q(\lambda,0),
\label{eq:channel-filter}
\end{IEEEeqnarray}
where  $H(\lambda,\mathcal L) \eqdef \exp(j4s\lambda^2 \const L)$ is the all-pass-like channel filter. 
It follows that, just as the Fourier transform converts a linear convolutional channel
into a number of parallel independent channels in frequency, the NFT converts the nonlinear dispersive channel \eqref{eq:nls-channel}
in the absence of noise into a number of parallel independent channels in nonlinear frequency. 
In NFDM information is modulated in independent spectral amplitudes
$\nft(\lambda)$ for every $\lambda$. 
In the absence of noise,
interference and ISI are simultaneously zero for all users of a multiuser network.
Therefore, in contrast to WDM, the NFDM AIR is infinite in the 
deterministic model at any non-zero power.


\section{Computing the Inverse NFT}
\label{sec:inverse-nft}

The standard approaches to the inverse NFT are based on the Riemann-Hilbert or
Gelfand-Levitan-Marchenko integral equations \cite[Ch.~2.2]{ablowitz2003dcn}.
Naive numerical solution of these integral equations may be time-consuming or prone to error; 
optimized implementation requires attention to details of the numerical solution of the integral equations, 
which may produce a digression from the main purpose of using the NFT here. 
Consequently, we seek simpler, more natural and interpretable methods.

We interpret the inverse NFT as the dual of the forward NFT, as in the
Fourier transform. We show that the  forward
and inverse NFT can be computed by running the iterations of any integration scheme, forward
and backward in time.
This allows us to use algorithms for the forward NFT for computing the inverse NFT, 
for example the Boffetta-Osborne and Ablowitz-Ladik 
schemes \cite{yousefi2012nft2}. 

We emphasize that the two algorithms that are presented in this section are not new. 
They exists in applied mathematics \cite{bruckstein1985}, \cite{boffetta1992cds}, are used in the literature of the fiber Bragg gratings 
design \cite{skaar2003,rosenthal2003}, and revisited recently for fast implementation \cite{wahls2015fast,vaibhav2017finft}. 
However, their derivation in the literature has been made unnecessarily over-elaborate, as well as 
intertwined with the details of the physical application. The contribution of this section is to 
re-derive these existing algorithms using essentially a few lines 
of elementary analysis, in a simple and clear manner --- compare 
the equation \eqref{eq:qk-from-vk+1-ALL} with \eg, paper \cite{skaar2003} or \cite{bruckstein1985}, \cite{boffetta1992cds}. 
Section~\ref{sec:bo} and \ref{sec:al} prior to \eqref{eq:qk-from-vk+1-ALL} adapt the forward transform from \cite{yousefi2012nft2}.

The inverse NFT may be divided into three steps. First, from the NFT $Q(\lambda)$ we  obtain 
$a(\lambda)$ and $b(\lambda)$. Second, from $a(\lambda)$ and $b(\lambda)$ we obtain $a(\lambda,t)$ and $b(\lambda,t)$.
Third, we get $q(t)$ from $a(\lambda,t)$ and $b(\lambda,t)$. 

\begin{remark}
The NFT in \cite{yousefi2012nft1} was presented for the focusing regime. 
The equations in \cite{yousefi2012nft1} can often be extended from the focusing regime  
to the general case by substitutions 
\begin{IEEEeqnarray*}{rClrCl}
\{a,a^*\}&\rightarrow& \{a,a^*\}, \qquad\{q,q^*\}&\rightarrow&~\{q, -sq^*\},\\
\{b,b^*\}&\rightarrow& \{b,-sb^*\},\qquad \{\hat q,\hat q^*\}&~\rightarrow& \{\hat
q,-s\hat q^*\}.
\end{IEEEeqnarray*}
For example, the Parseval's identity for the NFT is \cite[Sec. IV. D. 8]{yousefi2012nft1}
\begin{IEEEeqnarray}{rCl}
\int\limits_{-\infty}^{\infty} |q(t)|^2\der t=-\frac{s}{\pi}\int\limits_{-\infty}^\infty
\log\left(1-s|\hat q(\lambda)|^2\right)\der\lambda.
\label{eq:parseval}
\end{IEEEeqnarray}
This implies that $|\hat q(\lambda)|<1$ in the defocusing regime.

In this section, we drop the variable $z$ denoting the distance. Thus,
$q(t)\eqdef q(t,z)$, $a(\lambda)\eqdef a(\lambda,z)$, etc.

\qed
\end{remark}

\subsection{Inverse Boffetta-Osborne Scheme}
\label{sec:bo}

We shall begin with the Boffetta-Osborne integration
scheme for the Zakharov-Shabat system, which is known to perform well for 
the forward NFT \cite{boffetta1992cds,yousefi2012nft2}. 
This scheme is also called the continuous-time layer-peeling (CLP). 

In the BO scheme, $q(t)$ is approximated by a piece-wise constant function. 
The NFT of a constant, and by induction a piece-wise constant, function 
can be computed analytically \cite[Sec.~IV.~C]{yousefi2012nft1}. Let $q(t)$ be a piece-wise constant function supported on 
$[T_1,T_2]$. Discretize the time on the mesh 
\begin{IEEEeqnarray}{rCl}
t[k] \eqdef T_1+k\epsilon, \quad k=0, \cdots, N-1,
\IEEEeqnarraynumspace
\label{eq:t[k]}
\end{IEEEeqnarray}
where $\epsilon\eqdef T/N$, $T\eqdef T_2-T_1$, and set $q[k]\eqdef q(t[k])$. Recall that the forward iteration in the 
BO scheme is \cite[Sec.~III.~C]{yousefi2012nft2}.
\begin{IEEEeqnarray}{rCl}
\vect{u}[k+1,\lambda]=\matd M[k,\lambda, q] \vect{u}[k,\lambda],\quad \vect{u}[0,\lambda]=\begin{pmatrix} 1\\0 \end{pmatrix},
\label{eq:forward-bo}
\end{IEEEeqnarray}
where the \emph{monodromy matrix} $\matd{M}$ is
\begin{IEEEeqnarray*}{rCl}
\matd{M}[k,\lambda,q]\eqdef
\begin{pmatrix}
x[k] & \bar y[k]\\
y[k] & \bar x[k]
\end{pmatrix},
\end{IEEEeqnarray*}
in which \cite[Eq. 11]{yousefi2012nft2}:
\begin{IEEEeqnarray}{rCl}
x[k] &\eqdef&\Bigl(\cos(D\epsilon )-j\frac{\lambda}{D}\sin(D\epsilon )\Bigr)e^{j\lambda \left(t[k]-t[k-1]\right)}, 
\label{eq:xk}\\
y[k]&\eqdef&\frac{sq^*[k]}{D}\sin(D \epsilon)e^{-j\lambda(t[k]+t[k-1])},
\label{eq:yk}
\end{IEEEeqnarray}
where $D\eqdef\sqrt{\lambda^2-s|q[k]|^2}$ and
\begin{IEEEeqnarray*}{rCl}
\bar{x}[k]\eqdef x^*[k](\lambda^*), \quad
\bar y[k]\eqdef sy^*(\lambda^*).
\end{IEEEeqnarray*}
It can be verified that $\det \matd{M}=1$.

The three steps of the inverse NFT via the BO scheme are as follows. 

\subsubsection*{Step 1 (Factorization)}

The first step is obtaining two parameters $a(\lambda)$ and $b(\lambda)$ from one parameter 
$Q(\lambda)$. Fundamentally, this is a Riemann-Hilbert factorization problem in 
the complex analysis \cite[Ch.~]{ablowitz2003dcn}. 
Given $Q(\lambda)$, the Riemann-Hilbert integral equations in
\cite[Eq. 30]{yousefi2012nft1} can be solved at $t=T$ to obtain 
$\vect{V}^2(T,\lambda)\eqdef \vect{u}=[a(\lambda),b(\lambda)]$. This requires solving a
system of linear equations. This step does not incur a high computational cost, because the equations are solved only
at one point $t=T$. 
Furthermore, in the defocusing regime, this step can be done more efficiently as follows.

From the unimodularity condition
\begin{IEEEeqnarray}{rCl}
\left|a(\lambda)\right|^2-s\left|b(\lambda)\right|^2=1, \quad \lambda\in\Reals,
\label{eq:C-unimodularity}
\end{IEEEeqnarray}
we obtain
\begin{IEEEeqnarray}{rCl}
\left|a(\lambda)\right|=\frac{1}{\sqrt{1-s|\hat q(\lambda)|^2}}.
\label{eq:|a|}
\end{IEEEeqnarray}

Since $q(t)\in L^1(\Reals)$,  $a(\lambda)$ can be analytically extended  to 
$\Complex^+$ \cite[Lemma~2.1]{ablowitz2003dcn}. The real and imaginary parts 
of an analytic function are Hilbert transforms of one another. 
In Appendix~\ref{sec:kramers-kronig} it is shown that in the defocusing regime 
$\log a(\lambda)$ can also be analytically extended to a region near the real line, 
thus amplitude $|a(\lambda)|$ and phase $\angle(a(\lambda))$ are related by the Hilbert transform, \ie,
\begin{IEEEeqnarray}{rCl}
 \angle(a(\lambda))=\mathcal H\left(\log \left|a(\lambda)\right|\right),
\quad \lambda\in\Reals,
\label{eq:angle(a)}
\end{IEEEeqnarray}
where $\mathcal H$ denotes the Hilbert transform and $\angle$ is the principal value of the phase. 
As a result, we easily obtain $a$ and $b=\hat q a$ 
in the defocusing regime.

\subsubsection*{Step 2 (Integration)}

The second step is to obtain $a(t,\lambda)$ and $b(t,\lambda)$ from 
$a(\lambda)$ and $b(\lambda)$. This step can be performed by integrating the
Zakharov-Shabat system  in negative time, \ie,  by running the iterations for the forward NFT 
backward in time. 

From \eqref{eq:forward-bo}, the backward iteration is:
 \begin{IEEEeqnarray}{rCl}
\vect{u}[k,\lambda]=\matd{M}^{-1}[k,\lambda, q] \vect{u}[k+1,\lambda],\quad
\vect{u}[N,\lambda]=\begin{pmatrix} a(\lambda)\\b(\lambda) \end{pmatrix},
\IEEEeqnarraynumspace
\label{eq:inverse-ab}
\end{IEEEeqnarray}
where $k=N-1,N-2,\cdots,0$, and 
\begin{IEEEeqnarray*}{rCl}
\matd{M}^{-1}[k,\lambda, q]
=
\begin{pmatrix}
\bar x[k] & -\bar y[k]\\
-y[k] & x[k]
\end{pmatrix}.
\end{IEEEeqnarray*}

Computing $x[k]$ and $y[k]$ in  \eqref{eq:xk} and \eqref{eq:yk} may involve evaluating  
$\cosh(x)$ for some $x$, when $s=1$. Moderate values of $x$, \eg, $x>25$,  result in 
large numbers and numerical error.  The numerical error may be reduced if the ratio 
$\hat q=b/a$ is updated, so that large numbers are canceled between $a$ and $b$. The forward iteration 
for $\hat q[k,\lambda]$ is
\begin{IEEEeqnarray*}{rCl}
\hat q[k,\lambda]&=&\alpha[k]
\frac{\bar\beta[k]+ \hat q[k-1,\lambda]}{1+\beta[k]\hat q[k-1,\lambda]},\quad\hat q[0,\lambda]=0,
\end{IEEEeqnarray*}
where
\begin{IEEEeqnarray*}{rCl}
\alpha[k]&\eqdef&e^{-2j\lambda(t[k]-t[k-1])}\frac{1+\frac{j\lambda}{D}\tan(D\epsilon)}{1-\frac{j\lambda}{D}\tan(D\epsilon)},\\
\beta[k]&\eqdef&e^{2j\lambda t[k-1]}\frac{q[k]}{D}\frac{\tan(D\epsilon)}{1-\frac{j\lambda}{D}\tan(D\epsilon)},
\end{IEEEeqnarray*}
and $\bar\beta[k]\eqdef s\beta^*[k](\lambda^*)$.
The backward iteration is
\begin{IEEEeqnarray*}{rCl}
\hat q[k-1,\lambda]=\frac{\hat q[k,\lambda]-\bar\beta[k]\alpha[k]}{\alpha[k]-\beta[k]\hat q[k,\lambda]},\quad \hat q[N,\lambda]=\hat q(\lambda).
\end{IEEEeqnarray*}

\subsubsection*{Step 3 (Signal Recovery)}

From \eqref{eq:eigen}, $q(t)$ can be read off as
\begin{IEEEeqnarray}{rCl}
q^*(t)&=&s\frac{\partial_t v_2-j\lambda v_2}{v_1}\nn\\
&=&se^{j2\lambda t}\frac{\partial_t b(t,\lambda)}{a(t,\lambda)},
\label{eq:q-from-nft}
\end{IEEEeqnarray}
where we used \eqref{eq:ab-norm} (alternatively, see \cite[Eq.~24]{yousefi2012nft2}).
Equation \eqref{eq:q-from-nft} constitutes the recovery relation. 

The derivative term $\partial_t b$ in \eqref{eq:q-from-nft} incurs numerical error. It is preferred to recover 
$q(t)$ via a relation that does not involve derivative. 
From \cite[Eq. 32]{yousefi2012nft1}, we have
\begin{IEEEeqnarray}{rCl}
q^*(t)&=&\frac{s}{\pi}\int\limits_{-\infty}^{\infty}\hat q(\lambda)e^{j2\lambda t}V_2^1(t,\lambda)\der \lambda,
\label{eq:q-from-ab-1}
\end{IEEEeqnarray} 
where $\vect{V}^1\eqdef [V_1^1, V_2^1]^T$ is a normalized
eigenvector with the boundary condition $\vect{V}^1(+\infty,\lambda)=(1,0)^T$  
\cite[Eqs. 27 \& 17(a)]{yousefi2012nft1}. 

Fix $t$ and define 
\begin{IEEEeqnarray}{rCl}
p_t(\tau)\eqdef
\begin{cases}
q(\tau), & \tau<t,\\[1pt]
\frac{1}{2}q(t), & \tau=t,\\[1pt]
0, & \tau> t.
\end{cases}
\label{eq:pt-def}
\end{IEEEeqnarray}
Applying \eqref{eq:q-from-ab-1} to $p_t(\tau)$ at $\tau=t$, we obtain 
\begin{IEEEeqnarray}{rCl}
p^*_t(t)&=&
\frac{s}{\pi}\int\limits_{-\infty}^{\infty}\hat p_t(\lambda)e^{j2\lambda t}\bar{V}_2^1(t,\lambda)\der \lambda
\nn
\\
&\overset{(a)}{=}&
\frac{s}{\pi}\int\limits_{-\infty}^{\infty}\hat p_t(\lambda)e^{j2\lambda t}\der \lambda
\label{eq:p-Fourier}
\\
&\overset{(b)}{=}&
\frac{s}{\pi}\int\limits_{-\infty}^{\infty}\hat q(t,\lambda)e^{j2\lambda t}\der \lambda,
\label{eq:pt}
\end{IEEEeqnarray}
where $\bar{\vect{V}}^1$ is an eigenvector of $p_t(\tau)$, 
$\hat p_t(\lambda)$ is the NFT of $p_t(\tau)$ and $\hat
q(t,\lambda)\eqdef b(t,\lambda)/a(t,\lambda)$.
Step $(a)$ follows because $p_t(\tau)=0$ for $\tau>t$, thus $\bar{V}^1_2(t,.)=\bar{V}^1_2(+\infty,.)=1$.
Step $(b)$ follows because $\hat q(t,\lambda)$ is in one-to-one relation with $q(\tau)=p_t(\tau)$ for 
$\tau<t$, thus $\hat p_t(\lambda)=\hat q(t,\lambda)$. From \eqref{eq:pt-def} and \eqref{eq:pt}
\begin{IEEEeqnarray*}{rCl}
q^*(t)=\frac{2s}{\pi}\int\limits_{-\infty}^{\infty}\hat q(t,\lambda)e^{j2\lambda t}\der \lambda.
\label{eq:qt}
\end{IEEEeqnarray*}

Function $p_t(\tau)$ is in general discontinuous at $\tau=t$.  
From \eqref{eq:p-Fourier}, $p^*_t(t)$ is described by a Fourier integral. 
At a point of discontinuity, the Fourier integral is the average of the value 
of the function on the two sides of that point. The choice $p(t)=q(t)/2$ in \eqref{eq:pt-def} 
ensures that \eqref{eq:p-Fourier} holds at the point of discontinuity. 

It follows that 
\begin{IEEEeqnarray}{rCl}
q^*[k]=\frac{2s}{\pi}\int\limits_{-\infty}^{\infty}\hat q[k,\lambda]e^{j2\lambda t[k]}\der\lambda,
\label{eq:CLP-q-from-qhat}
\end{IEEEeqnarray}
where $\hat q[k,\lambda]\eqdef\hat q(t(k],\lambda)$. The integral in \eqref{eq:CLP-q-from-qhat} can be discretized on the mesh
\begin{IEEEeqnarray}{rCl}
\lambda[m] \eqdef L_1+m\mu, \quad \quad m=0,\cdots,M-1,
\IEEEeqnarraynumspace
\label{eq:lambda-mesh}
\end{IEEEeqnarray}
where $\lambda\in [L_1,L_2]$, $L\eqdef L_2-L_1$, and $\mu\eqdef L/M$.

Steps 3) and 2) are consecutively performed to obtain $q[k]$, for all $k$. Step 1) is performed 
initially if $\vect u[k,z]$ is updated instead of $\hat q[k,\lambda]$.

\begin{figure*}[t]
\centerline{
\begin{tabular}{ccc}
\scalebox{1}{\begin{tikzpicture}[scale=0.666,font=\normalsize]

\def\fx{\x,{3*1/exp(((\x)^2)/4)}}

\draw[color=darkred,line width=1.1pt, samples=100, domain=-4:4] plot (\fx) node[right] {};

\foreach \y in {-3,-2.5,-2,-1.5,-1,-0.5,0,0.5}
{
\def\fy{3*1/exp(((\y)^2)/4)}
\draw[line width=1.1pt, color=black] ({\y},{0}) rectangle ({\y+0.5},{\fy});
}

\foreach \y in {1,1.5,2,2.5}
{
\def\fy{3*1/exp(((\y)^2)/4)}
\draw[dotted,line width=1.1pt, color=darkblue] ({\y},{0}) rectangle ({\y+0.5},{\fy});
}

\foreach \y in {-2.5,-2,-1.5,-1,-0.5, 0, 0.5,1,1.5,2,2.5,3}
{
\def\fy{3*1/exp(((\y-0.5)^2)/4)}
\filldraw[line width=1.1pt] ({\y},{\fy}) circle (0.05cm);
}

\filldraw[line width=1.1pt] (-3,0) circle (0.05cm);

\foreach \y in {-3,-2.5,-2,-1.5,-1,-0.5, 0, 0.5,1,1.5,2,2.5}
{
\def\fy{3*1/exp(((\y)^2)/4)}
\filldraw[line width=1.1pt, color=white] ({\y},{\fy}) circle (0.05cm);
}

\draw[line width=1.1pt, ->] (-4.5,0) -- (4.5,0) node[right] {$t$};

\def\deltaspace{-0.5}
\node at (1,\deltaspace){$t[k]$};
\node[right=-3mm] at (-1.5,4.2){$q^*[k]=\frac{2s}{\pi}\int\limits_{-\infty}^{\infty}\hat q[k,\lambda]e^{j2\lambda t[k]}\der\lambda$};

\draw[line width=1.1pt, <-] (1.28,2.82) -- (1.9,3.5);

\end{tikzpicture}}
&
\includegraphics[width=0.3\textwidth]{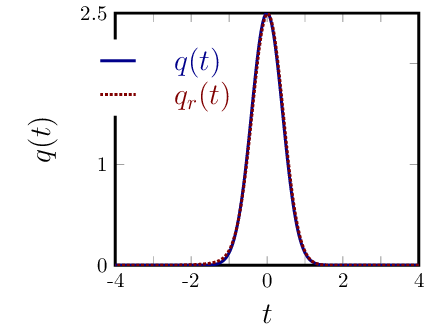}
&
\includegraphics[width=0.3\textwidth]{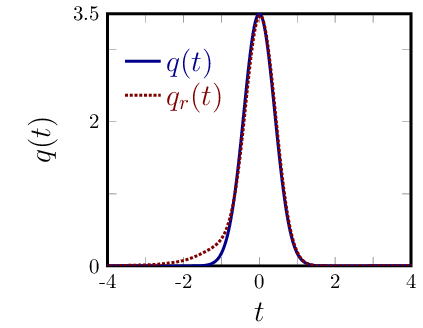}
\\
(a)
&
~~~~~~~~~~~(b)
&
~~~~~~~~~~~(c)
\end{tabular}
}
\caption{ The Boffetta-Osborne scheme: (a) schematic diagram; (b,c) 
the result $q_r(t)$ of applying the forward-inverse NFT to Gaussian functions $q(t)$ with two amplitudes. The parameters are $T=L=64$ and $N=M=2048$.}
\label{fig:piece-wise-constant}
\end{figure*}

In Section~\ref{sec:remarks} we will show that the BO scheme, although suitable for the forward NFT, 
is inaccurate for the inverse NFT. As a result, below we consider the Ablowitz-Ladik scheme, where
all variables are discrete and finite-dimensional. 
We may also refer to the AL scheme as the discrete layer-peeling (DLP).

\subsection{Inverse Ablowitz-Ladik Scheme}
\label{sec:al}

The AL discretization of the eigenproblem \eqref{eq:eigen} is 
\cite[Eq. 17]{yousefi2012nft2}
\begin{IEEEeqnarray}{rCl}
\vect{v}[k+1,z]&=&c_k
\begin{pmatrix}
z^{\frac{1}{2}} & Q[k]\\
sQ^*[k] & z^{-\frac{1}{2}}
\end{pmatrix}
\vect{v}[k,z],
\label{eq:AL}\\
\vect{v}[0,z] &=&
\begin{pmatrix}
1\\
0
\end{pmatrix}
z^{\frac{k_0}{2}},\quad 0\leq k\leq N-1,
\nn
\end{IEEEeqnarray}
where $z\eqdef \exp(-2j\lambda\epsilon)$, $Q[k]\eqdef\epsilon q[k]$, $c_k
\eqdef 1/\sqrt{1-s|Q[k]|^2}$, and $k_0\eqdef T_1/\epsilon$. 
Equation \eqref{eq:ab-norm} suggests the following change of variable
\begin{IEEEeqnarray}{llllll}
a[k,z]\eqdef z^{-\frac{k_0+k}{2}}v_1[k,z],\quad b[k,z]\eqdef z^{\frac{k_0+k}{2}}v_2[k,z].
\label{eq:Ak-ak-v-a}
\IEEEeqnarraynumspace
\end{IEEEeqnarray}
The variable $z$ in this section should not be confused with the distance.

Upon iterating \eqref{eq:AL}--\eqref{eq:Ak-ak-v-a}, $a[k,z]$ and $b[k,z]$ are expressed as 
a linear combination of powers of $z^{\frac{1}{2}}$ and $z^{-\frac{1}{2}}$. We scale 
$a$ and $b$ to work with only the negative powers of $z$: 
\begin{IEEEeqnarray}{llllll}
A[k,z]\eqdef a[k,z], \quad  B[k,z]\eqdef z^{-(k_0+k)+\frac{1}{2}}b[k,z].
\IEEEeqnarraynumspace
\label{eq:Ak-ak-v-b}
\end{IEEEeqnarray}

From \eqref{eq:AL}, \eqref{eq:Ak-ak-v-a} and \eqref{eq:Ak-ak-v-b}, we obtain the forward iteration for the 
scaled AL scheme
\begin{IEEEeqnarray}{rCl}
\begin{pmatrix}
A[k+1,z]\\
B[k+1,z]
\end{pmatrix}
&=&
c_k
\begin{pmatrix}
 1 & Q[k]z^{-1} \\
sQ^*[k] & z^{-1} 
\end{pmatrix}
\begin{pmatrix}
A[k,z]\\
B[k,z]
\end{pmatrix},
\label{eq:AL-norm}
\IEEEeqnarraynumspace
\\
\begin{pmatrix}
A[0,z]\\
B[0,z]
\end{pmatrix}
&=&
\begin{pmatrix}
1\\
0
\end{pmatrix},\quad
0 \leq k\leq N-1.
\nn
\end{IEEEeqnarray}
Note that $A[k,z]$ and $B[k,z]$ are polynomials of $z^{-1}$ with \emph{finite degrees}. 

Instead of updating $A[k,z]$ and $B[k,z]$ in \eqref{eq:AL-norm} for each $z$,  
we can update the polynomial coefficients.
The $z$-transforms of $A$ and $B$ are
\begin{IEEEeqnarray}{rCl}
A[k,z]&=&\sum\limits_{m=0}^{\bar M-1}A_m[k]z^{-m},  
\IEEEyesnumber\IEEEyessubnumber\label{eq:Ak-Bk-series-a}\\
B[k,z]&=&\sum\limits_{m=0}^{\bar M-1}B_m[k]z^{-m},
\IEEEyessubnumber
\label{eq:Ak-Bk-series-b}
\end{IEEEeqnarray}
where $\bar M\in\naturals$ is the number of non-zero coefficients, 
and $A_m[k]$ and $B_m[k]$ 
are calculated as 
\begin{IEEEeqnarray}{rCl}
A_{m}[k]&=&\frac{1}{L}\int\limits_{L_1}^{L_2} A[k,e^{-2j\lambda t[k]}]e^{-j2m\epsilon\lambda}\der\lambda.
\label{eq:A_m[k]-}
\end{IEEEeqnarray}
If $a[k,z]=A[k,z]$ is obtained via \eqref{eq:AL-norm}, then $\bar M<\infty$, and $a[k,z]$ approximates $a(t,z)$. 
If $a[k,z]$ is obtained by  discretizing $a(t,\lambda)$ --- namely, $a[k,.]$ equals to 
$a(k\epsilon,.)$  for some $k$--- then in general $\bar M=\infty$.

The variable $z=z(\lambda)$ is the nonlinear frequency. The variable $m\in\bigl\{0,\cdots, \bar M-1\bigr\}$ 
may be viewed as the \emph{nonlinear time}, the 
Fourier-conjugate of the nonlinear frequency $z$. Define the vector of 
 the samples of $A[k,z(\lambda)]$ in the nonlinear frequency $\lambda$
\begin{IEEEeqnarray}{rCl}
\vect{\hat A}[k] \eqdef
\begin{pmatrix}
A[k,e^{-j2\lambda[0] t[k]}]
&
\cdots
&
A[k,e^{-j2\lambda[M-1] t[k]}]
\end{pmatrix},
\IEEEeqnarraynumspace
\label{eq:A[k]}
\end{IEEEeqnarray}
and the corresponding vector in the nonlinear time 
\begin{IEEEeqnarray*}{rCl}
\vect A[k] &\eqdef& \begin{pmatrix}A_0[k],\cdots, A_{\bar M-1}[k] \end{pmatrix}.
\end{IEEEeqnarray*}

Let us discretize the integral in \eqref{eq:A_m[k]-} in $\lambda$ on the mesh \eqref{eq:lambda-mesh}, 
choosing for the rest of the paper
\begin{IEEEeqnarray*}{rCl}
M=\bar M=N,\quad L=\pi/\epsilon,\quad \mu=\pi/T. 
\end{IEEEeqnarray*}
We obtain that $\vect{A}[k]$ and $\vect{\hat A}[k]$ are
the scaled nonlinear Fourier coefficients in the temporal and frequency domains, and 
\begin{IEEEeqnarray}{rCl}
\vect A=\frac{1}{M}
\vect{e}\hadamard
\textnormal{DFT}(\vect{\hat A}),
\label{eq:A_m[k]}
\end{IEEEeqnarray}
where DFT is the discrete Fourier transform (with zero-based indexing), $\hadamard$ is the Hadamard product, 
and
\begin{IEEEeqnarray*}{rCl}
\vect e\eqdef (1,e^{-j2\pi\frac{L_1}{L}}, \cdots, e^{-j2\pi\frac{L_1}{L}(M-1)}).
\end{IEEEeqnarray*}
Similar relations hold for the coefficient $B$.

The three steps of the inverse NFT via the AL scheme are as follows.

\subsubsection*{Step 1 (Factorization)}

First we compute  $a(N,\lambda) \eqdef a(\lambda)$ and $b(N,\lambda)\eqdef b(\lambda)$ 
from $Q(\lambda)$ according to the procedure 
outlined in Step 1 in the BO scheme. Then we calculate the frequency domain vectors
$\vect{\hat A}[N]$ and $\vect{\hat B}[N]$ from \eqref{eq:Ak-ak-v-b} and \eqref{eq:A[k]},
and the time domain vectors $\vect{A}[N]$ and $\vect{B}[N]$ 
from \eqref{eq:A_m[k]}. The coefficients $A_m[N]$ and $B_m[N]$ are generally non-zero for all $m\geq 0$. 
Here, these sequences must be truncated.

\subsubsection*{Step 2 (Integration)}

From \eqref{eq:AL-norm}, the backward iteration in the frequency domain $z$ is
\begin{IEEEeqnarray}{rCl}
\begin{pmatrix}
A[k,z]\\
B[k,z]
\end{pmatrix}
&=&
c_k
\begin{pmatrix}
 1 & -Q[k] \\
-sQ^*[k]z & z 
\end{pmatrix}
\begin{pmatrix}
A[k+1,z]\\
B[k+1,z]
\end{pmatrix},
\IEEEeqnarraynumspace
\label{eq:inverse-AB}
\end{IEEEeqnarray}
where $k=N-1,N-2,\cdots,0$. 
From \eqref{eq:inverse-AB}, the backward iteration in the temporal domain is
\begin{IEEEeqnarray*}{rCl}
\vect{A}[k] & =& c_k\Bigl(\vect A[k+1]-Q[k]\vect B[k+1]\Bigr),\\
\vect{B}[k] & =& c_k\shift\Bigl(-sQ^*[k]\vect A[k+1]+\vect B[k+1]\Bigr),
\end{IEEEeqnarray*}
where $\shift(\vect x)\eqdef (x_2,\cdots, x_n,0)$ is the left shift of $\vect x\eqdef(x_1,x_2,\cdots, x_n)$. 

The continuous-frequency unimodularity condition \eqref{eq:C-unimodularity} on 
the unit circle $|z|=1$ is
\begin{IEEEeqnarray}{rCl}
\bigl|A[k,z]\bigr|^2-s\bigl|B[k,z]\bigr|^2=1.
\label{eq:AA*-BB*=1-z}
\end{IEEEeqnarray}
In the temporal domain
\begin{IEEEeqnarray}{rCl}
  \vect A\convolution\cev{\vect{A}}^*-s \vect B\convolution \cev{\vect{B}}^* =\boldsymbol{\delta},
\label{eq:AA*-BB*=1}
\end{IEEEeqnarray}
where $\boldsymbol{\delta} \in\Reals^{2M-1}$ is the all-zero vector except for a one in the middle, 
$\convolution$ is vector convolution, 
and $~\cev{}~$ is the flip operator, \eg, $\cev{\vect{A}}_m=\vect A_{M-m-1}$. The condition \eqref{eq:AA*-BB*=1-z} or 
\eqref{eq:AA*-BB*=1} can be checked in iterations to monitor the numerical error.

\subsubsection*{Step 3 (Signal Recovery)}

The discretization of  $\hat q(t,\lambda)=\exp(-2j\lambda t)(v_2/v_1)$ on the time mesh \eqref{eq:t[k]} 
suggests to consider $\hat q[k,z] \eqdef z^{k_0+k} v_2[k,z]/v_1[k,z]$.
We have
\begin{IEEEeqnarray*}{rCl}
\hat q[k+1,z] &=&z^{k_0+k+1}\frac{v_2[k+1,z]}{v_1[k+1,z]}
\\
&=&z^{k_0+k+1}\frac{sQ^*[k]v_1[k,z]+z^{-\frac{1}{2}}v_2[k,z]}{z^{\frac{1}{2}}v_1[k,z]+Q[k]v_2[k,z]}
\\
&=&z^{k_0+k+\frac{1}{2}}\frac{sQ^*[k]+ z^{-\frac{1}{2}}\left(\dfrac{v_2[k,z]}{v_1[k,z]}\right)}{1+z^{-\frac{1}{2}}
Q[k]\left(\dfrac{v_2[k,z]}{v_1[k,z]}\right)}.
\end{IEEEeqnarray*}
By induction, we can see that $\deg(v_1)>\deg(v_2)$ for $k\geq 1$, 
where $\deg(v)$ is the highest positive power of $z$ in $v$. 
Thus 
$\lim\limits_{z\rightarrow \infty} v_2[k,z]/v_1[k,z]=0$ and 
\begin{IEEEeqnarray}{rCl}
Q^*[k]&=&s\lim\limits_{z\rightarrow\infty}z^{-k_0-k-\frac{1}{2}}\hat q[k+1,z]
\nn\\
&=&s\lim\limits_{z\rightarrow\infty}z^{-k_0-k-\frac{1}{2}}\frac{b[k+1,z]}{a[k+1,z]}
\nn\\
&=&s\frac{B[k+1,\infty]}{A[k+1,\infty]}
\nn
\\
&=&s\frac{B_0[k+1]}{A_0[k+1]}.
\IEEEeqnarraynumspace
\label{eq:qk-from-vk+1-ALL}
\end{IEEEeqnarray}
Equating \eqref{eq:qk-from-vk+1-ALL} basically follows from equating the like powers of 
$z$ in \eqref{eq:AL-norm}. 

The AL scheme is summarized in Algorithm~\ref{alg1}.

\begin{algorithm}[t]
\caption{The AL scheme for the inverse NFT.}
\label{alg1}
\begin{algorithmic}

\STATE Obtain $a(\lambda)$ and $b(\lambda)$ from $\hat q(\lambda)$, 
using \eqref{eq:|a|} and \eqref{eq:angle(a)}.
Set $a[N,\lambda]\eqdef a(\lambda)$ and
$b[N,\lambda] \eqdef b(\lambda)$

\STATE Compute $\vect{\hat A}[N]$ and $\vect{\hat B}[N]$ from \eqref{eq:Ak-ak-v-b} and \eqref{eq:A[k]},
and $\vect{A}[N]$ and $\vect{B}[N]$ 
from \eqref{eq:A_m[k]}. 
Truncate $\vect{A}[N]$ and $\vect{B}[N]$ to finite-dimensional vectors.

\FOR{ $k=N, N-1,\cdots, 1$}

\STATE Obtain $Q[k]$ from \eqref{eq:qk-from-vk+1-ALL} or \eqref{eq:Q[k-1]}, and $q[k]=Q[k]/\epsilon$.

\STATE Update
\begin{IEEEeqnarray*}{rCl}
\vect A &\leftarrow& c_k\bigl(\vect A-Q[k]\vect B\bigr),\\
\vect B &\leftarrow& c_k\shift\bigl(-sQ^*[k]\vect A+\vect B\bigr).
\end{IEEEeqnarray*}

\ENDFOR
\end{algorithmic}
\end{algorithm}

\begin{figure*}[t]
\centerline{
\begin{tabular}{cc}
\scalebox{0.6}{\begin{tikzpicture}[>=stealth,
ubox/.style={draw,line width=1.5pt, fill=blue!20,minimum height=5ex,minimum width=10ex}]

\node[ubox] (O) {user 0};
\node[ubox,right=1ex of O] (R) {user 1};
\node[ubox,left=1ex of O] (L) {user $-1$};
\node[left=1ex of L] (CL) {$\cdots$};
\node[right=1ex of R] (CR) {$\cdots$};

\draw[->,line width=1.5pt] (L.south west) ++(-6ex,0) coordinate (ORIG) -- ++(45ex,0) node[anchor=east,yshift=-2ex] {$\lambda$ (frequency)};
\draw[->,line width=1.5pt] (ORIG) -- ++(0,10ex) node[anchor=north west]{$Q(\lambda)$} node[anchor=east,inner sep=1pt]{};

\end{tikzpicture}}
&
\includegraphics[width=0.7\textwidth]{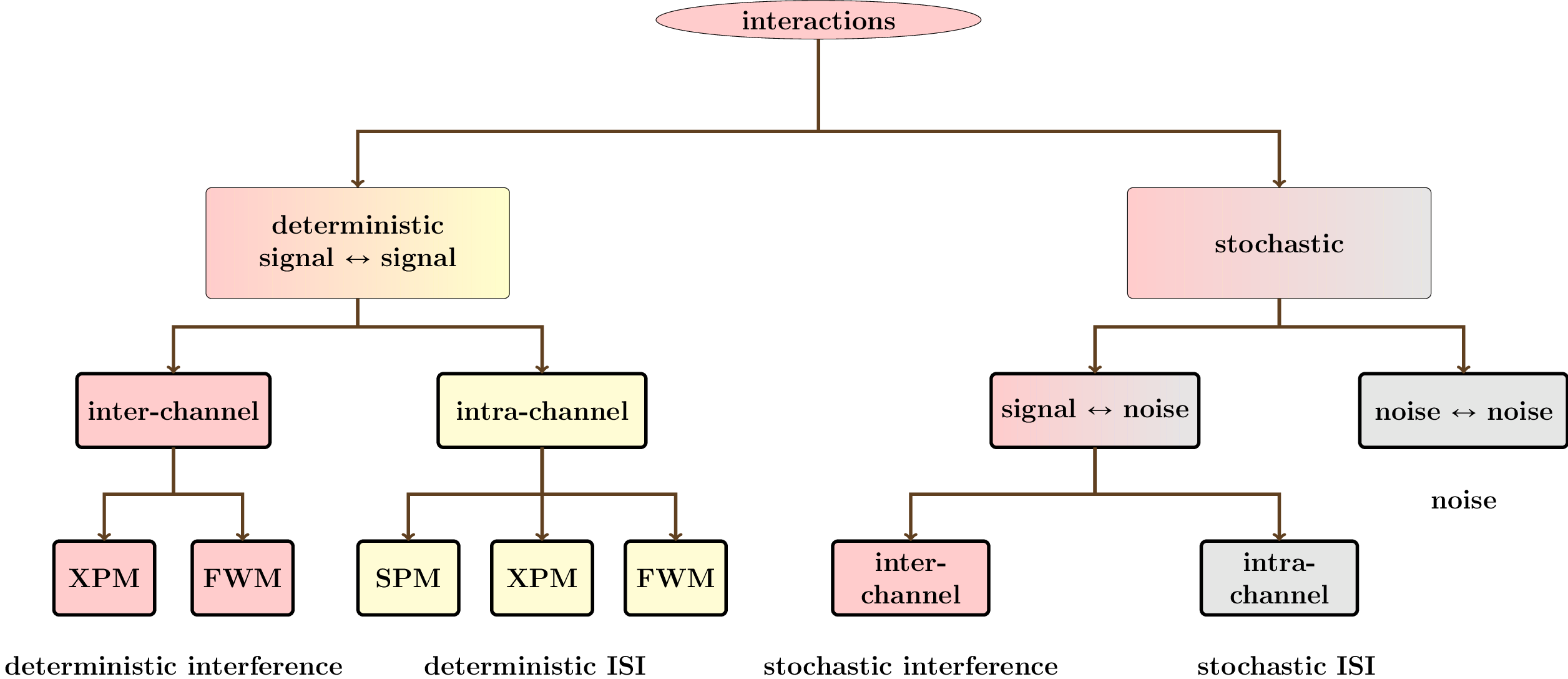}
\\
(a) & ~~~~~(b)
\end{tabular}
}
\caption{ (a) Multi-user NFDM. (b) Interactions in linear
  multiplexing. Interactions that cannot be removed are identified
  with red. Interactions that can partly be equalized using signal
  processing are shown by yellow. Grey terms are addressed by coding. SPM and XPM stand for self- and cross-phase
modulation, and FWM is four-wave mixing.}
\end{figure*}



\subsection{Complexity, Accuracy and Remarks}
\label{sec:remarks}

\subsubsection{Complexity}

Computing the Fourier integral or the NFT amounts to performing an integration; see \eqref{eq:eigen}.
There are two variables $t$ and $\lambda$. If each variable is discretized in a mesh with $N$ points, there are $N^2$
points in a rectangular mesh in the $t-\lambda$ plane. Therefore, the basic algorithms for computing the 
Fourier transform or the NFT take $O(N^2)$ operations --- including the AL and BO schemes in this paper.
This complexity, of course, is reduced to $O(N\log N)$ in the 
Fourier transform and, in some cases, to $O(N\log^2 N)$ in the NFT \cite{skaar2003,vaibhav2017finft}. 
The least complexity NFT is subject to ongoing research.

\subsubsection{Accuracy}

There are two sources of error in the BO scheme. First, the function $q(t)$ is approximated by 
a piece-wise constant function --- once this approximation is made, the BO scheme is exact 
for the piece-wise constant function. The error in this part is $O(\epsilon)$ for smooth $q(t)$. 
Second, the integral \eqref{eq:CLP-q-from-qhat} has to be discretized which is 
subject to error. Furthermore, here the discontinuity of $q(t)$ results in numerical errors when evaluating the Fourier integral 
\eqref{eq:CLP-q-from-qhat} due to the Gibbs effect.  Consequently, Step 3 is inexact. The error in this part depends on how the integral 
\eqref{eq:CLP-q-from-qhat} is evaluated. Steps 1 and 2, on the other hand, are exact in the BO scheme.

Figs.~\ref{fig:piece-wise-constant} (b--c) show the result of applying the forward NFT followed by the 
inverse NFT  to a Gaussian function using the BO scheme in the defocusing regime. It can be seen that the error is increased with 
the signal amplitude. Importantly, the local error accumulates as the iteration proceeds towards $t=T_1$.
The BO scheme is sensitive to i) spectral parameters $L$ and $M$ and, to a smaller extent, 
temporal variables $T$ and $N$; 
and ii) the accuracy of the computing the integral \eqref{eq:CLP-q-from-qhat}.

In the AL scheme, Steps 2 and 3 are exact. 
Here too there are two sources of error. 
First, the continuous-time operator $L$ is approximated by its AL discretization. The error in this
part is $O(\epsilon)$ for smooth functions. Second, the vectors $\vect A[N]$
and $\vect B[N]$ are truncated to finite-dimensional vectors in Step 1. The error in this part depends on the 
coefficients that are neglected.
Note that Steps 2 and 3 are performed consecutively. Since these steps 
are exact, 
the error does not accumulate in the AL scheme. The step that is subject to error is Step 1 which is 
outside the iteration. Therefore, the error can be controlled in the AL scheme.

\subsubsection{Remarks}

The AL discretization is obtained by approximation $1-2j\lambda$ in the first-order Euler discretization 
with $\exp(-j2\lambda \epsilon)$ \cite[Sec.~III. E]{yousefi2012nft2}.
This approximation makes eigenvectors a periodic 
function of $\lambda$ with period $L=\pi/\epsilon$. This is in
contrast to the BO scheme where $L$ is arbitrary.

In the forward AL scheme, $\vect A[k]$ and $\vect B[k]$ have finite lengths and 
satisfy  the discrete-frequency unimodularity condition \eqref{eq:AA*-BB*=1-z}.
However, in the inverse NFT, while $a(\lambda)$ and $b(\lambda)$ are obtained from the
continuous-frequency unimodularity condition \eqref{eq:C-unimodularity}, 
the truncated $\vect A[N]$ and $\vect B[N]$ do not generally satisfy the discrete-frequency unimodularity 
condition \eqref{eq:AA*-BB*=1}.
In general, $\vect A[N]$ and $\vect B[N]$ should be realizable, \ie, they should be the image of 
$\vect A[0]$ and $\vect B[0]$ for some signal \cite{skaar2003}.  
The AL scheme is improved if  \eqref{eq:AA*-BB*=1-z} is enforced in the backward iteration. 
It has been shown that this minor modification notably improves the algorithm, because it prevents the numerical 
error from propagating in the iteration \cite{skaar2003}.

In the AL scheme, $\vect v[N,z]$ depends on $\vect v[0,z]$ with the scale factor $\prod c_k=\exp(P)$, where
\begin{IEEEeqnarray*}{rCl}
  P\eqdef -\frac{1}{2}\sum\limits_{k=1}^N\log\left(1-s\left|Q[k]\right|^2\right).
\end{IEEEeqnarray*}
As $N\rightarrow \infty$,  $\epsilon$, $Q[k]$ and $P$ tend to zero. 
As a heuristic condition, we require that
\begin{IEEEeqnarray}{rCl}
0\leq P<P_{\max} 
\label{eq:P<Pmax}
\end{IEEEeqnarray}
for a moderate $P_{\max}$, and
\begin{IEEEeqnarray}{rCl}
  |Q[k]|<1.
\label{eq:|Q[k]|<1}
\end{IEEEeqnarray}
It can be shown that
\begin{IEEEeqnarray*}{rCl}
  \frac{\der|Q[k-1]|}{\der |Q[k]|}\propto \frac{1}{\left(1-s|Q[k]|^2\right)^2}.
\end{IEEEeqnarray*}
It is thus desirable to have 
\begin{IEEEeqnarray}{rCl}
 |Q[k]|\ll 1, 
\label{eq:|Q[k]|<0.1}
\end{IEEEeqnarray}
so that $Q[k-1]$ is not 
sensitive to error in $Q[k]$.

The AL scheme  can  be implemented in the frequency domain $z$ as well.  Instead of updating
$\vect{A}[k]$ and $\vect{B}[k]$, one could update $\vect{\hat A}[k]$ and $\vect{\hat B}[k]$. 
The accuracy of the AL scheme stems from the fact that it works with finite-dimensional vectors, and 
from the exactness of the Step 2 and 3, not from the implementation in the temporal domain.

The recovery relation \eqref{eq:qk-from-vk+1-ALL} can be obtained in alternative ways.
In Appendix~\ref{sec:A-B}, both \eqref{eq:qk-from-vk+1-ALL} and 
the following recovery relation are proved by induction
\begin{IEEEeqnarray}{rCl}
 Q[k]=\frac{A_{k}[k+1]}{B_{k}[k+1]}.
\label{eq:Q[k-1]}
\end{IEEEeqnarray}
Finally,  \eqref{eq:qk-from-vk+1-ALL} can also be obtained via discretizing \eqref{eq:q-from-nft},
by replacing $\exp(j2\lambda
t)$ with $z^{-k-k_0}$, $\partial_t$ with $z^{1/2}/\epsilon$, and
equating the zeroth-order term in the $z$ expansion of both sides (corresponding to
$z\rightarrow\infty$ or $\lambda\rightarrow\infty$ in $\Complex^+$).
Both \eqref{eq:qk-from-vk+1-ALL} and \eqref{eq:Q[k-1]} follow basically by equating the like powers of 
$z$ in $v[k,z]$ in the AL scheme.

The BO scheme that is proposed in \cite{boffetta1992cds} is a first-order Euler integration 
scheme --- except that, for the Zakharov-Shabat system in question, each
step in integration can be done analytically without approximation. 
It generally applies well to differential equations which are exactly solvable in the special case 
that coefficients are constant. 

The BO and AL schemes are simple first-order integration schemes. They can be extended to higher-order 
integration schemes in straightforward ways 
to compute the NFT arbitrary accurately (at the expense of performing more operations).  
We do not pursue more accurate or faster algorithms in this paper.


\section{Linear and Nonlinear Frequency-Division Multiplexing}
\label{sec:nfdm}

The theoretical underpinning of NFDM with discrete and continuous spectrum  is presented in 
\cite{yousefi2012nft1,yousefi2012nft2,yousefi2012nft3}. However, the theory simplifies considerably when the discrete spectrum 
is absent. 
In this section, we simplify NFDM with the continuous spectrum and present a general approach to nonlinear
modulation and multiplexing (valid in the focusing and defocusing regimes).

\subsection{Linear Modulation and Multiplexing}

Let $\mathcal H$ be a complex Hilbert
space with a basis $(\phi_{\ell})_{\ell}$, $\norm{\phi_{\ell}}=1$. 
Linear modulation on 
$\mathcal H$ corresponds to $x\eqdef\sum_{\ell} s_{\ell}\phi_{\ell}$, where $s_{\ell}\in\Complex$ 
are symbols. If the basis is orthogonal, the demodulation is performed efficiently 
per-symbol as $s_{\ell}=\inner{x}{\phi_{\ell}}$. 
The basis elements are sometimes called (linear-algebraic) transmission modes. 

Let $\mathcal H_k$ be disjoint subspaces of $\mathcal H$ (excluding the zero element). 
Linear multiplexing of $x_k\in\mathcal H_k$ corresponds 
to $x\eqdef\sum_{k} x_k$. If $\mathcal H_k$ are pair-wise orthogonal, the demultiplexing is 
simply $x_k=P_{\mathcal H_k}(x)$, where $P_{\mathcal H_k}$ is the orthogonal projection 
onto $\mathcal H_k$.  Clearly, $x_k$ need not be linearly modulated within $\mathcal H_k$.

Signals that are orthogonal at the channel input may not be orthogonal at the channel output. 
As a result, linear modulation and multiplexing are generally subject to ISI and interference
--- even in the linear channels.  Suppose that the channel is governed by a compact (linear)
self-adjoint map $T$ in the absence of noise. Then 
$(\phi_{\ell})_{\ell}$ can be chosen to be the set of 
orthogonal eigenvectors of $T$, for which the ISI is absent under per-symbol demodulation. 
Similarly, if $\mathcal H_k$ are chosen to be the span of disjoint subsets of orthogonal eigenvectors of $T$, interference 
is absent in demultiplexing by projection.

In linear modulation and multiplexing the set of signals that are transmitted in the 
channel (\ie, in the time domain) is represented by a vector space. We next consider
NFDM where this signal space is not a linear space.  

\subsection{Nonlinear Modulation and Multiplexing Using the Continuous Spectrum}

Nonlinear modulation and multiplexing using the NFT consists of linear modulation and multiplexing
in the nonlinear Fourier domain. In what follows, we consider a multi-user system with $N_u$ users 
and $N_s$ symbols per user. Linear and nonlinear (passband) bandwidths are denoted, respectively, 
by  $B$ and $W$.

\subsubsection{Signal Spaces}

Let $\hat{\mathcal{Q}}$ be the space of signals $\hat
q(\lambda,z): \mathcal W \times \Realsnn \mapsto \mathbb D$, where
$\mathcal W\eqdef [-\pi W,\pi
W]$, and
\begin{IEEEeqnarray*}{rCl}
  \mathbb D\eqdef
  \begin{cases}
    \Complex, & s=-1,
    \\
    \mathbb T, & s=+1,
    \end{cases}
\end{IEEEeqnarray*}
where $\mathbb T$ is the open unit disk $\mathbb T\eqdef\bigl\{
z\in\Complex \: : \: |z|<1 \bigr\}$. 
Note that, from \eqref{eq:parseval}, if $s=1$, $|\hat{q}(\lambda,z)|<1$.

In the defocusing regime, $\hat{\mathcal{Q}}$ is not a vector
space (with addition and multiplication), and not best suited for a communication theory.
We introduce a transformation $F: \hat q(\lambda,z)\mapsto U(\lambda,z)$ so as to map $\mathbb D$ 
to $\Complex$. 
An appropriate choice motivated by \eqref{eq:parseval} is 
\begin{IEEEeqnarray}{rCl}
U(\lambda,z)\eqdef \left(-2s\log(1-s\left|\hat
  q(\lambda,z)\right|^2)\right)^{\frac{1}{2}}e^{j\angle(\hat q(\lambda,z))},
  \IEEEeqnarraynumspace
\label{eq:X-qhat}
\end{IEEEeqnarray}
where $\angle(\hat q)$ is the phase of $\hat q$.
We choose the image $\mathcal U$ of $F: \hat{\mathcal Q}\mapsto \mathcal U$
to be the vector space of finite-energy signals supported 
on $\mathcal W$. Transformation \eqref{eq:X-qhat} is required in the defocusing regime. In the 
focusing regime, it is not required, however it makes the signal energy in time, frequency, nonlinear time and
nonlinear frequency the same according to \eqref{eq:energies}. 
This property is convenient for modulation.

\subsubsection{NFDM Transmitter}
\label{sec:nfdm-transmitter}

Partition the space $\mathcal U=\bigcup_k
\mathcal U_k$ into orthogonal user subspaces $(\mathcal U_k)_{k=k_1}^{k_2}$, where 
user $k$ operates in the subspace $\mathcal U_k$ spanned by the orthogonal basis 
$(\Phi_{\ell}^k(\lambda))_{\ell=\ell_1}^{\ell_2}$, $\norm{\Phi_{\ell}^k(\lambda)}^2=2\pi$. 
Here, $k$ is the user index, $\ell$ is the symbol index and
\begin{IEEEeqnarray*}{rCl}
k_1\eqdef-\Bigl\lfloor\frac{N_u}{2}\Bigr\rfloor,\quad
k_2\eqdef\Bigl\lceil\frac{N_u}{2}\Bigr\rceil-1,
\end{IEEEeqnarray*}
where $\lfloor x \rfloor$ and
$\lceil x \rceil$ denote, respectively, rounding $x\in\Reals$ to
nearest integers towards minus and plus infinity. Similar relations hold for 
$\ell_1$ and $\ell_2$ in terms of $N_s$.

The transmitted signal in the nonlinear frequency domain is
\begin{IEEEeqnarray}{rCl}
  U(\lambda,0)\eqdef
  \underbrace{\sum\limits_{k=k_1}^{k_2}}_{\textnormal{linear mux}}
\underbrace{\left(\sum\limits_{\ell=\ell_1}^{\ell_2}s_\ell^k\Phi_\ell^k(\lambda)\right)}_{\textnormal{linear
mod.}},
\label{eq:wdm-signal}
\end{IEEEeqnarray}
where $(s_{\ell}^k)_{\ell=\ell_1}^{\ell_2}$ are symbols of the user $k$. As discussed 
in Section~\ref{sec:nft-optical}, the modulation in \eqref{eq:wdm-signal} can be linear or nonlinear, whereas the 
orthogonal multiplexing is required in NFDM.

In general, user signals can overlap in any domain. Interference
in user $k$ is in the orthogonal subspace $\mathcal U_k^\perp$ and can
be projected out. However, for the rest of the paper,  we consider  the special case where 
users operate in equally-spaced non-overlapping intervals of width
$W_0\eqdef W/N_u$ Hz in the nonlinear frequency 
$l\eqdef\lambda/2\pi$. User $k$ is centered
at the nonlinear frequency $l=kW_0$ and operates in the nonlinear frequency interval 
\begin{IEEEeqnarray*}{rCl}
  \mathcal W_k\eqdef\bigl[kW_0-\frac{W_0}{2} \:,\: k W_0+\frac{W_0}{2}\bigr],\quad k_1\leq k\leq k_2.
\end{IEEEeqnarray*}
In this case, $\Phi^k_\ell(\lambda)\eqdef\Phi_\ell(\lambda-2\pi k W_0)$ where 
$(\Phi_{\ell} (\lambda))_{\ell=\ell_1}^{\ell_2}$ is an orthogonal basis for signals supported on 
$\mathcal W_0$. 

Taking the inverse Fourier transform $\mathcal F^{-1}$ of \eqref{eq:wdm-signal} with respect to $\lambda$
\begin{IEEEeqnarray}{rCl}
u(\tau,0)&\eqdef&\ft^{-1}(U(\lambda,0)) \label{eq:u-U1}
\\
 &=&\sum\limits_{k=k_1}^{k_2}
\left(\sum\limits_{\ell=\ell_1}^{\ell_2}s_\ell^k\phi_\ell(\tau)\right)e^{j2\pi k W_0 \tau},
\label{eq:u-U}
\end{IEEEeqnarray}
where $\phi_{\ell}(\tau) \eqdef \mathcal F^{-1}\bigl(\Phi_{\ell}(\lambda)\bigr)$, and $\norm{\phi_{\ell}^k(\tau)}=1$. 
The variable $\tau$ can be interpreted
as the \emph{nonlinear time} (measured in seconds), the Fourier-conjugate of the nonlinear 
frequency $l=\lambda/2\pi$ (measured in Hz). It coincides with the physical time $2t$ for
small amplitude signals $\norm{q(t)}_{L^1}\ll 1$, $\tau\approx 2t$. Typically, 
\begin{IEEEeqnarray}{rCl}
\phi_\ell(\tau)\eqdef\phi(\tau-\ell T_0), 
\label{eq:phi-tau}
\end{IEEEeqnarray}
where the pulse shape $\phi(\tau)$ and $T_0$ are chosen so that $|\mathcal{F}(\phi)(f)|^2$  satisfies the Nyquist 
zero-ISI criterion for $T_0$.

The modulation begins with choosing the symbols $s_\ell^k$ from a constellation $\Xi$ and computing $U(\lambda,0)$. 
This can be done directly in the spectral domain
\eqref{eq:wdm-signal}, or starting in the temporal domain \eqref{eq:u-U}. The transmitted NFT signal is
\begin{IEEEeqnarray}{rCl}
\hat q(\lambda,0)=\left(s-se^{-\frac{s}{2}|U(\lambda,0)|^2}\right)^{\frac{1}{2}}e^{j\angle(U(\lambda,0))},
\label{eq:qhat-X}
\end{IEEEeqnarray}
where $U(\lambda,0)=\ft(u(\tau,0))$. Finally, 
\begin{IEEEeqnarray*}{rCl}
  q(t,0)=\inft(\hat q(\lambda, 0)).
\end{IEEEeqnarray*}
The symbols $s_{\ell}^k$ and the nonlinear bandwidth $W$ are chosen such that the Fourier spectrum 
$\mathcal Q(f,0)\eqdef\ft(q(t,0))$ has bandwidth $B$
Hz. When $\norm{q(t,0)}_{L^1}\ll 1$, $q(t,0)\approx -\sqrt 2 u^*(2t,0)$ and $W\approx
2B$.

The mapping \eqref{eq:X-qhat} ensures that $[s_\ell^k]$ and $q(t,0)$ have the same energies:
\begin{IEEEeqnarray}{rCl}
\sum\limits_{\ell,k}^{}|s_\ell^k|^2&=&\norm{u(\tau,0)}^2_{L^2(\Reals)}
\nn
\\
&=&\frac{1}{2\pi}\norm{U(\lambda,0)}^2_{L^2(\Reals)}
\nn
\\
&=&
-\frac{s}{\pi}\int\limits_{-\infty}^{\infty}\log\left(1-s\left|\hat
q(\lambda,0)\right|^2\right)\der\lambda
\nn
\\
&=&\int\limits_{-\infty}^{\infty}|q(t,0)|^2\der t,
\label{eq:energies}
\end{IEEEeqnarray}
where \eqref{eq:energies} follows from the Parseval's identity \eqref{eq:parseval}.

\subsubsection{NFDM Receiver}

At the receiver, first the NFT is applied to $q(t, \const{L})$ to obtain $\hat q(\lambda,\const L)$. Then, channel equalization is
performed
\begin{IEEEeqnarray}{rCl}
\hat q_e(\lambda,\const L)\eqdef H^{-1}(\lambda,\const L)\hat q(\lambda,\const L),
\label{eq:equal}
\end{IEEEeqnarray}
where $H(\lambda,\const L)$ is the channel filter in \eqref{eq:channel-filter}.
Next, $U_e(\lambda,\const L)$ is computed from $\hat q_e(\lambda,\const L)$ according to 
\eqref{eq:X-qhat}. Finally, the received symbols are
\begin{IEEEeqnarray}{rCl}
  \hat s_\ell^k=\frac{1}{2\pi}\int\limits_{-\infty}^{\infty} U_e(\lambda,\const
  L)\Phi_\ell^*(\lambda-2\pi kW_0)\der\lambda.
\label{eq:s-hat}
\end{IEEEeqnarray}

Alternatively,  $u_e(\tau,\const L)\eqdef\mathcal{F}^{-1}(U_e(\lambda,\const{L}))$ can be computed.
The received symbols at the output are obtained by match filtering 
\begin{IEEEeqnarray*}{rCl}
  \hat s_\ell^k=\int\limits_{-\infty}^{\infty} u_e(\tau,\const L)\phi_\ell^*(\tau)e^{-j2\pi k W_0\tau}\der\tau.
\end{IEEEeqnarray*}

\subsubsection{Uniform and Exponential Constellations}

We choose a  multi-ring constellation $\Xi$ for symbols
$s_\ell^k$ in the $U$ domain, as shown in Fig.~\ref{fig:clouds}(a). 
There are $N_r$ uniformly-spaced rings
with radii in interval $[a, b]$ and ring spacing $\Delta r$, and $N_{\phi}$ uniformly-spaced phases.
Under the transformation \eqref{eq:qhat-X}, $\Xi$ maps to an exponential constellation $\Xi'$ in the $\hat q$
domain with phases in $\Xi$ and radii
\begin{IEEEeqnarray}{rCl}
r_n^2\eqdef 1-e^{-\frac{1}{2} (\Delta r)^2 n^2},\quad 1\leq n\leq N_r,
\label{eq:rn-lambda}
\end{IEEEeqnarray}
where we assumed $s=1$ and $a=0$. Thus, in the defocusing regime, as $n\rightarrow\infty$, 
$r_n\rightarrow 1$ and the distance between rings in $\Xi'$ decreases
exponentially. In this way, an infinite number of
choices is realized in the finite interval $|\hat q| \in [0,1)$.

\begin{remark}
The form of the NFDM 
signal \eqref{eq:u-U} with $\tau$ is identical to the form of the WDM signal in \cite[Eq. 2]{yousefi2012nft3} 
with $t$. This expression is simply the sampling theorem, representing a signal with 
a finite linear or nonlinear bandwidth in terms of its DoFs. The DoFs are identified 
in the $\tau$ domain in \eqref{eq:u-U}, and equivalently in the 
$\lambda$ domain in \eqref{eq:wdm-signal}.

\qed
\end{remark}


\section{Comparison of the AIRs of WDM and NFDM}
\label{sec:simulations}

In this section, we compare WDM and NFDM in the defocusing regime, subject to the 
same bandwidth and average power constraints. 

The average power of the signal is defined as $\const P\eqdef E/T$, where $E$ and $T$ are the energy and time duration 
of the (entire multiplexed) signal $q(t,0)$, respectively. In this section, time duration and
bandwidth are defined as intervals containing $99\%$ of the signal
energy  \cite{yousefi2012nft3}. 
We choose $\phi(\tau)$ in \eqref{eq:phi-tau} a root raised-cosine 
function with the excess bandwidth factor denoted by $r$. 
The system parameters are given in Table~\ref{tab:params}. 
Importantly, to keep the computational complexity manageable, we choose 
one symbol per user, \ie, $N_s=1$.

We consider one ADM at the receiver. This is a filter in the frequency in WDM, and in the nonlinear
frequency in NFDM (the ADM  only drops signals). In WDM, back-propagation is applied to the filtered 
signal according to the NLS equation. The corresponding equalization in NFDM is given by \eqref{eq:equal}.

We first present a simple simulation to 
illustrate the main ideas, before comparing the AIRs and the SEs.

\subsubsection{Illustrative Example}
\label{sec:example}

Figs.~\ref{fig:demonstrations}(a)--(b) show a sample WDM and NFDM signals at the transmitter and receiver, in the
absence of noise.
These two signals have the same power, bandwidth and time duration at the transmitter;
see Figs. \ref{fig:NFDM-WDM-comparisons}(a)--(b). 
It can be seen that
WDM users' signals interfere with one another, while NFDM users' signals are perfectly separated. 
Fig.~\ref{fig:demonstrations}(c) shows the input output signals in WDM after equalization.
The distortion in Fig.~\ref{fig:demonstrations}(c) increases with $\const P$, $N_s$ and the 
number of ADMs. This distortion is the bottleneck
in linear multiplexing, because it cannot be mitigated in a network environment. 
The absence of this distortion in NFDM is the underlying reason that NFDM 
outperforms WDM. We recover the matrix of symbols $[s_{\ell}^k]$ in NFDM nearly
perfectly when noise is zero, for $N_u=N_s=15$.

\subsubsection{Achievable Information Rate}
\label{sec:wdm-nfdm-AIRs}

We approximate the channels after equalization by 
discrete memoryless channels $s_0^0\mapsto \hat s_0^0$ in the linear or nonlinear frequency. The AIR 
is defined as the maximum of the mutual information over the probability distribution $p(s_0)\eqdef p_{S_0}(s_0)$: 
\begin{IEEEeqnarray*}{rCl}
&&R(\const P) \eqdef \max\limits_{p(s_0)} ~I(s_0; \hat s_0), \\
 && \E |s_0|^2=\const P,
\end{IEEEeqnarray*}
measured in bits per two real (one complex) dimensions (bits/2D) \cite[Chap.~2]{forney2005course}.

The constellation $\Xi$ consists of $N_r =32$ or $64$ rings (depending on the power) each with $N_\phi=128$
phase points, where for NFDM $a=0$ and $b=1.6$. 
The power spectral density of the distributed noise is
$\sigma_0^2\eqdef n_{sp}h f_0\alpha=6.48\times 10^{-21}~\rm{W/(km.Hz)}$  
calculated with realistic parameters in Table~\ref{tab:params}. The bandwidth of the noise is set to be 
the maximum bandwidth of the signal in distance, which we assume is  
the bandwidth of a signal with the highest energy corresponding to $|s_\ell^k|=b$, $\forall k,\ell$. 
The number of signal samples in time and nonlinear
frequency is $N=M=16384$. 
We estimate the transition probabilities $s_0^0\mapsto
\hat s_0^0$ based on 4000 simulations of the stochastic NLS
equation.

Fig. \ref{fig:nfdm-wdm-rates} shows the AIRs of NFDM and WDM. As expected, 
the WDM AIR characteristically vanishes as the input power is increased more
than an optimal value $\const P^*\approx -10$ dBm. In contrast, the NFDM AIR continues to 
increase for $\const P>\const P^*$ --- at least up to the maximum power in Fig. \ref{fig:nfdm-wdm-rates} 
where we could perform simulations. The channel capacity is upper bounded by 
 $\log_2(1+\snr)$, where $\snr=\const P/(\sigma^2_0B\const L)$ is the 
signal-to-noise ratio. This upper bound in Fig.~\ref{fig:nfdm-wdm-rates} is not a 
perfect straight line, because the power in the horizontal axis is 
based on the $99\%$ time duration.

\begin{table}[t]
\caption{Fiber and System Parameters}
\label{tab:params}
\centerline{\begin{tabular}{c|l|l}
$n_{\rm sp}$ & 1.1 & {\footnotesize excess spontaneous emission factor}\\
$h$ & $6.626 \times 10^{-34} {\rm J} \cdot {\rm s}$ & {\footnotesize Planck's constant} \\
$f_0$ & 193.55~{\rm THz} & {\footnotesize center frequency} \\
$\alpha$ & $0.046~{\rm km}^{-1}$ & {\footnotesize fiber loss (0.2~dB/km)} \\
$\gamma$ & $1.27~{\rm W}^{-1}{\rm km}^{-1}$ & {\footnotesize nonlinearity parameter}\\ 
$\mathcal L$ & 2000 km & fiber length \\
$D$ & -17 ${\rm{ps/(nm.km)}}$ & {\footnotesize dispersion parameter}\\
$N_u$ & 15 & number of users \\
$N_s$ & 1 & number of symbols per user \\
$B$ & 60 GHz & total bandwidth \\
$r$ & 0.25\% & excess bandwidth factor
\end{tabular}}
\end{table}

\begin{figure}
\centering
\def\powernoise{31.08}
\pgfplotstableread{
SNR AIR
30.72 0
28.77 0   
26.27 0   
22.75 0   
16.82 0   
11.29 0    
5.45 0
}\snraxis

\pgfplotstableread{
P AIR
-0.28 13.9
-2.23 13.2
-4.73 12.39
-8.25 11.22
-14.18 9.57
-19.71 7.88
-25.55 5.96
}\awgnrate

\pgfplotstableread{
P AIR
-2.45 11.29
-3.56 11
-6.01 10.54
-9.6 9.73
-15.67 8.06
-19.75 7.07
-25.6 5.8
}\nfdmrate

\pgfplotstableread{
P AIR
-0.28 5.38 
-2.23 6.36
-4.73 7.13
-8.25 8.27
-14.18 8.24
-19.71 6.88
-25.55 5.5
}\wdmrate

\begin{tikzpicture}

\begin{axis}[xmin=-26,xmax=1,ymin=0,ymax=14.5,line width=1, font=\normalsize,
legend entries={upper bound, \textcolor{darkblue}{NFDM}, \textcolor{darkred}{WDM}},
legend style={at={(0.45,0.97)}, draw=none, fill=none, legend cell align=left},
xlabel={$\mathcal P$ [dBm]},
ylabel={AIR [bits/2D]},
y label style={at={(0.05,0.5)}},
xtick={-25, -20, -15, -10, -5, 0},
grid,
fill=none,
axis x line*=bottom,
width=0.45\textwidth
]

\addplot[ name path=lowerbound, color=black, densely dashed, mark size=1pt, line width=1.3] table {\awgnrate};

\addplot[ name path=lowerbound, color=darkblue, smooth, mark=*, mark size=1pt, line width=1.3] table {\nfdmrate};

\addplot[color=darkred, smooth, mark=triangle*, line width=1.3] table {\wdmrate};

\end{axis}

\begin{axis}[xmin=0,xmax=35,ymin=0,ymax=14.5,line width=1, font=\normalsize,
xlabel={SNR [dB]},
xlabel style={at={(0.5,1.3)}},
axis x line*= top,
xtick={1.27, 7.67, 14.2, 20.6, 27.2, 33.8},
x tick label style={yshift=2pt},
xtick style={draw=none},
width=0.45\textwidth
]

\addplot[color=white, draw=none] table [x expr=\thisrowno{0}+31, y expr=\thisrowno{1}]{\wdmrate};

\end{axis}

\end{tikzpicture}
\caption{AIRs of NFDM and WDM, and the capacity upper bound.}
\label{fig:nfdm-wdm-rates}
\end{figure}
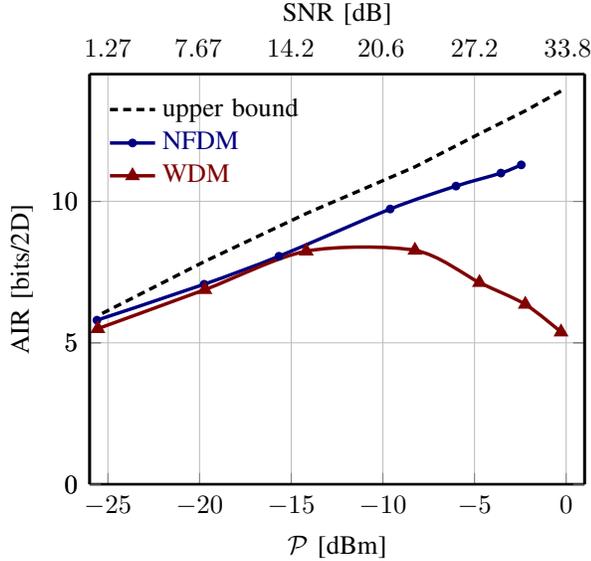

Fig.~\ref{fig:clouds} (b) shows the received symbols corresponding to four transmitted
symbols $s_0^0=0.7$, $s_0^0=0.98$, 
$s_0^0=1.2624$, $s_0^0=1.6$ in NFDM. The size of the `clouds'' does not increase notably 
as $|s_0^0|$ is increased. The corresponding constellation in WDM at the same power is presented in 
Fig.~\ref{fig:clouds}(c), showing received symbols for four transmitted symbols $s_0^0=0.7527$, $s_0^0=0.9575$, 
$s_0^0=1.185$, $s_0^0=1.458$. The WDM clouds in Fig.~\ref{fig:clouds}(c) are  
bigger than the NFDM clouds in Fig.~\ref{fig:clouds}(b). Note that in WDM, there is a 
rotation of symbols, even after back-propagation. This rotation, which is about $\gamma \const L\const P$ ($\gamma$ being
the nonlinearity coefficient), is due to the cross-phase modulation; see \cite[Eq. 21]{yousefi2014psd}.

Fig.~\ref{fig:entropy}(a)  shows that the conditional entropy in WDM increases with the input power, while
it is nearly constant in NFDM. Fig.~\ref{fig:entropy} (b) shows that the conditional probability
distribution $p(\hat s_0| s_0)\eqdef p_{\hat{S}_0|S_0}(\hat{s}_0| s_0)$ is shifted with $|s_0|$. 
Together these figures indicate that the channel in
the nonlinear Fourier domain is approximately an AWGN channel, for the signal and system parameters that we considered here.

\subsubsection{Spectral Efficiency}
\label{sec:wdm-nfdm-SEs}

Let $T(z)$ and $W(z)$ be the approximate time duration and
bandwidth of the signal at distance $z$.  
There are $T(0)W(0)$ complex DoFs in this time duration and bandwidth at $z=0$ ($WT\gg 1$). 
Among these, $N_uN_s$ complex DoFs are modulated in the input signal \eqref{eq:wdm-signal}. 
Define the \emph{modulation efficiency} $\eta$, $0 \leq \eta \leq 1$, as:
\begin{IEEEeqnarray*}{rCl}
\eta \eqdef \frac{N_uN_s}{\E T(0) \times\max\limits_{z} W(z)}.
\end{IEEEeqnarray*}
The higher is $\eta$, the more efficient is modulation. 
The SE $\rho$, in bits/s/Hz, is expressed in terms of the AIR as
\begin{IEEEeqnarray*}{rCl}
\rho =\eta R.
\end{IEEEeqnarray*}

We compare the SEs for one value of the average input power $\const P=-0.33$ dBm, and 
the parameters in Table~\ref{tab:params}. We obtain the AIRs: 
\begin{IEEEeqnarray*}{rCl}
R_{\textnormal{WDM}}=5.26,\quad  \quad 
R_{\textnormal{NFDM}}=10.5,\quad \textnormal{bits/2D}. 
\end{IEEEeqnarray*}
As a reference, the upper bound on the channel capacity at this power is $13.78$ bits/2D.
The modulation and spectral efficiencies are:
\begin{IEEEeqnarray*}{rClrCl}
\eta_{\textnormal{WDM}}&=&0.131, \quad  \rho_{\textnormal{WDM}} &=& 0.69 \quad \textnormal{bits/s/Hz},  \\
\eta_{\textnormal{NFDM}} &=& 0.147, \quad \rho_{\textnormal{NFDM}}&=& 1.54 \quad \textnormal{bits/s/Hz}.
\end{IEEEeqnarray*}
The gain in the SE is 2.23. 

We point out that the modulation efficiencies are small because $N_s$ is small. As a result, 
the above SEs are far below the maximum achievable SEs  corresponding to $N_s\rightarrow\infty$. 
For WDM, it can be proved analytically that as $N_s\rightarrow\infty$, $\eta\rightarrow 1$ and 
$\rho_{\textnormal{WDM}} \rightarrow R_{\textnormal{WDM}}$.  Accordingly, we anticipate that as 
$N_s\rightarrow\infty$,  
\begin{IEEEeqnarray*}{rCl}
\rho_{\textnormal{WDM}} &\rightarrow& 5.26 \quad \textnormal{bits/s/Hz},  \\
\rho_{\textnormal{NFDM}}&\rightarrow& 10.5 \quad \textnormal{bits/s/Hz}.
\end{IEEEeqnarray*}

\subsubsection{Limitations of the Results and Future Work}
We close this section by pointing out some of the limitations of our results and chart directions for research.

\paragraph{Non-ideal models} 
A looming weakness of NFDM is that it applies only to integrable models of the optical 
fiber. An example is 
the lossless noise-free NLS equation with second-order dispersion and 
cubic nonlinearity.  Following the methodology established in this paper, AIRs of NFDM and WDM 
in the presence of perturbations --- such as loss, higher-order dispersion and polarization 
effects ---  were recently studied in \cite{yangzhang2018jlt}. It is shown that uncompensated perturbations 
reduce the AIRs of both schemes. A conclusive comparison of the AIRs with perturbations compensation 
is still open research. 

\paragraph{AIRs as $N_s\rightarrow\infty$}

In this paper, we considered $N_s=1$, whereas in practice $N_s$ can be over several hundred.   
The asymptotic capacity $C(\const P, n)$ of a discrete-time model of the optical fiber as a function of the 
input power $\const P$ and the number of DoFs 
$n=N_sN_u$ is established in \cite{yousefi2016cap}. The capacity formula \cite[Thm.~1]{yousefi2016cap} shows 
that, for fixed $\const P$, as $N_s$ is increased the capacity is decreased. 
As explained in \cite{yousefi2016cap}, the reason is that the signal-noise interaction 
grows with $N_s$. Therefore, as $N_s$ is increased, the AIR of NFDM and WDM may decrease (with
$N_s$, not $\const P$).
The conclusion that the 
NFDM outperforms WDM for $N_s=1$ is yet to be examined for $N_s>1$.
In the system considered in this paper, the signal-noise interaction
is expected to factor in similarly in both schemes. 

Note that as $N_s\rightarrow\infty$, the peak-to-average power ratio (PAPR)  of $\hat q(\lambda,0)$
often increases. This leads to numerical error as $\hat q(\lambda,0)\rightarrow 1$. To enable simulations 
with $N_s>1$, i) methods for reducing the PAPR in OFDM
can be applied; ii) data can be directly
modulated in \eqref{eq:wdm-signal}; or iii) transformations other than \eqref{eq:X-qhat} can be considered.

\paragraph{Realistic parameters}

The simulations in this paper are performed with $B=60$ GHz, $N_u=15$, and $N_s=1$, 
which do not correspond to practical systems. Simulations with realistic values for these parameters require 
more computational resources, time, and possibly refined algorithms. The value of the dispersion parameter in the paper 
is $D=-17$ ps/(nm-km); in realistic systems $D$ varies widely depending on the type of the
fiber; for instance $D=-4.6$  ps/(nm-km) in a fiber used in submarine applications. The value of $D$ 
should not play a significant role in NFDM.

\begin{figure*}[t]
\centerline{
\begin{tabular}{c@{~~~~}c@{~~~~}c}
\includegraphics[width=0.275\textwidth]{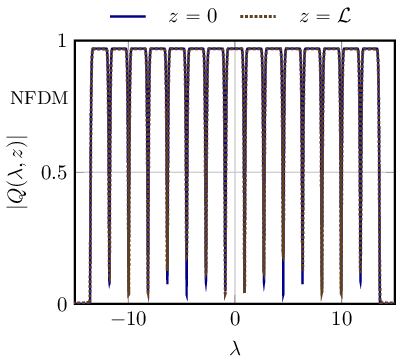}
&
\includegraphics[width=0.275\textwidth]{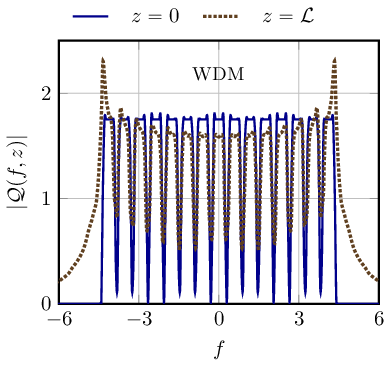}
&
\includegraphics[width=0.275\textwidth]{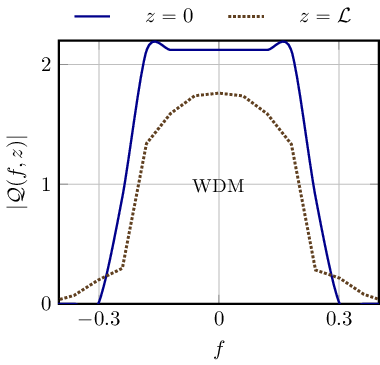}
\\
~~~~~~(a) & ~~~~~(b) & ~~~~~(c)
\end{tabular}
}
\caption{Interference in WDM and NFDM in the absence of noise, for signals with the same power, bandwidth and time duration. 
(a) No interference in NFDM. Interference in WDM: (b) before equalization, and
(c) after equalization.}
\label{fig:demonstrations}
\end{figure*}

\begin{figure}[t]
\centerline{
\begin{tabular}{cc}
\includegraphics[width=0.225\textwidth]{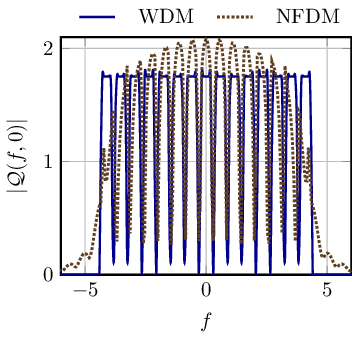}
&
\includegraphics[width=0.235\textwidth]{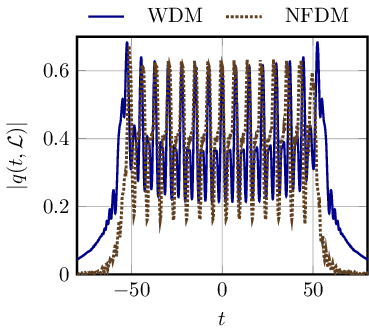}
\\
~~~~~~(a) & ~~~~~~~(b)
\end{tabular}
}
\caption{ 
WMD and NFDM signals have the same 
(a) bandwidth $B=60$ GHz (that is maximum at $z=0$), and 
(b) time duration 1.5 ns (that is maximum at $z=2000$ km).
The power is  $\const P=0$ dBm. 
} 
\label{fig:NFDM-WDM-comparisons}
\end{figure}

\begin{figure}[t]
\centerline{
\begin{tabular}{cc}
\includegraphics[width=0.24\textwidth]{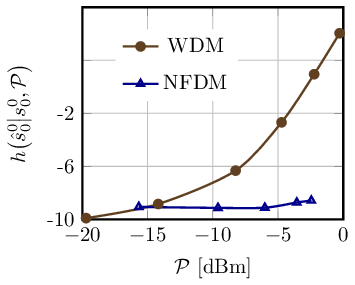}
&
\includegraphics[width=0.24\textwidth]{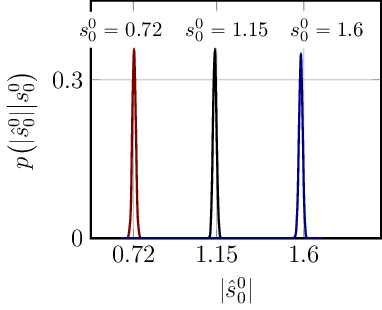}
\\
~~~~~ (a) & ~~~~~(b)
\end{tabular}
}
\caption{ (a) Compared with WDM, the conditional differential entropy in NFDM is nearly constant as $\const P$ is
  increased ($s_0^0=0.2$). 
  (b) The conditional probability distribution $p\bigl(|\hat s_0^0| \bigl| s_0^0 \bigr)$ in NFDM, for
  three values of $s_0^0$.}
\label{fig:entropy}
\end{figure}

\begin{figure*}[t]
\centering
\begin{tabular}{c@{~~~~}c@{~~~~}c}
\includegraphics[width=0.3\textwidth]{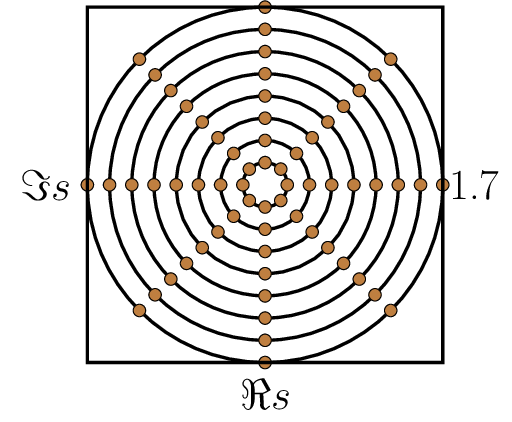}
&
\includegraphics[width=0.3\textwidth]{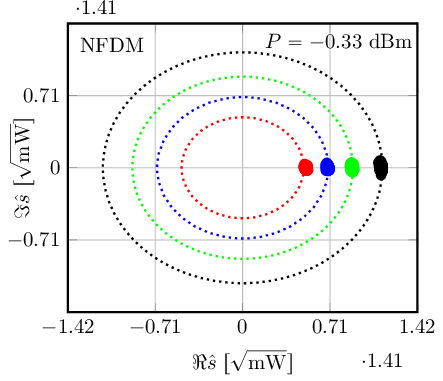}
&
\includegraphics[width=0.3\textwidth]{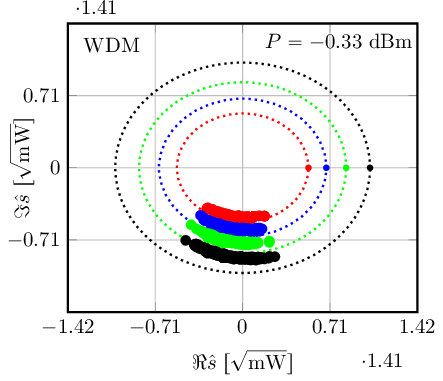}
\\
~(a) & (b) & (c)
\end{tabular}
\caption{(a) Constellation for $s_\ell^k$ in the $U$ domain. 
Received symbols for four transmitted symbols in (b) NFDM, and
(c) WDM.}
\label{fig:clouds}
\end{figure*}


\section{Conclusion}
\label{sec:conclusions}

The paper shows that the NFDM AIR is greater than the WDM
AIR for a given  power and bandwidth, in an integrable model
of the optical fiber in the defocusing regime, and in a representative system 
with one symbol per user. While the paper serves as a good starting  point, 
more research is needed in comparing the linear and nonlinear multiplexing.


\section*{Acknowledgments}

The research was done when the authors were at TU Munich in Germany. 
Financial support of the Institute for Advanced Study at TU Munich, 
funded by the German Excellence
Initiative, as well as Alexander 
von Humboldt Foundation, funded by the German
Federal Ministry of Education and Research, is acknowledged.
Mansoor Yousefi benefited from discussions with Frank Kschischang and Gerhard Kramer.


\appendices

\section{Proof of \eqref{eq:qk-from-vk+1-ALL} and \eqref{eq:Q[k-1]}}

\label{sec:A-B}

The first few iterations in the scaled AL scheme are:
\begin{IEEEeqnarray*}{rCl}
\begin{pmatrix}
A[0,z]\\
B[0,z]
\end{pmatrix}
&=&\begin{pmatrix}
1\\
0
\end{pmatrix}
,\quad
\begin{pmatrix}
A[1,z]\\
B[1,z]
\end{pmatrix}
=
\bar c_0
\begin{pmatrix}
1\\
sQ^*_0
\end{pmatrix}
,\\
\begin{pmatrix}
A[2,z]\\
B[2,z]
\end{pmatrix}
&=&\bar c_1
\begin{pmatrix}
1+sQ^*_0Q_1z^{-1}\\
sQ^*_1+sQ^*_0z^{-1}
\end{pmatrix},
\\
\begin{pmatrix}
A[3,z]\\
B[3,z]
\end{pmatrix}
&=& \bar c_2
\begin{pmatrix}
1+s(Q^*_0Q_0+Q^*_1Q_2)z^{-1}+sQ^*_0Q_2z^{-2}\\
sQ^*_2-s(Q^*_0Q_1Q^*_2-Q^*_1)z^{-1}+sQ^*_0z^{-2}
\end{pmatrix}
,
\end{IEEEeqnarray*}
where $Q_k\eqdef Q[k]$ and $\bar c_k\eqdef\prod_{i=0}^{k} c_i$. 

For $k=0,1,2$, the coefficients of $A[k+1,z]$ and $B[k+1,z]$ with the smallest power of $z^{-1}$ 
are 
\begin{IEEEeqnarray}{rCl}
A_0[k+1]=\bar c_k,\quad B_0[k+1]=s\bar c_k Q^*[k].
\label{eq:A0-B0-proof}
\end{IEEEeqnarray}
By induction, \eqref{eq:A0-B0-proof} holds for all $k\geq 0$. Thus, 
$Q^*[k]=sB_0[k+1]/A_0[k+1]$, which is \eqref{eq:qk-from-vk+1-ALL}.

Similarly, for $k=1,2$, the coefficients of $A[k+1,z]$ and $B[k+1,z]$ 
with the highest power of $z^{-1}$ are 
\begin{IEEEeqnarray*}{rCl}
A_{k}[k+1]=s\bar c_k Q_0^*Q_k,\quad B_k[k+1]=s\bar c_kQ^*_0. 
\end{IEEEeqnarray*}
By induction, the last equation holds for all $k\geq 1$. Thus, 
$Q[k]=A_{k}[k+1]/B_{k}[k+1]$, which is \eqref{eq:Q[k-1]}.


\section{Kramers-Kronig Relations}
\label{sec:kramers-kronig}
The real and imaginary parts of an analytic function are
related via the Kramers-Kronig relations.

\begin{lemma}[Kramers-Kronig Relations]
\label{lemm:kramers-kronig}
Let $f(z)\eqdef u(x,y)+jv(x,y)$ be a function of a complex variable
$z\eqdef x+jy\in\Complex^+$, where $u$ and $v$ are real-valued functions. Suppose that 
$f(z)$ is analytic in $\Complex^+$ and $f(z)$ decays as $1/z$ as 
$|z|\rightarrow\infty$. Then 
\begin{IEEEeqnarray}{rCl}
u(x,y)=-\mathcal H_x(v(x,y)),\quad v(x,y)=\mathcal H_x(u(x,y)),
\label{eq:hilbert-transform-eqs}
\end{IEEEeqnarray}
where $\mathcal H_x$ denotes the Hilbert transform with respect to $x$: 
\begin{IEEEeqnarray*}{rCl}
\mathcal H_x(f(x,y))&\eqdef&\frac{1}{\pi}\textnormal{p.v.}\int\limits_{-\infty}^\infty \frac{f(x',y)}{x-x'}\der x'\\
&=&\frac{1}{\pi x}\convolution f(x,y),
\end{IEEEeqnarray*}
where $\convolution$ is convolution with respect to $x$, and p.v. is the
principal value.
\end{lemma}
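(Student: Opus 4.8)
The plan is to prove the Kramers--Kronig relations by the standard contour-integration argument applied to a function that is analytic in the closed upper half-plane and decays at infinity. First I would fix a real point $x$ and an interior height $y>0$, and consider the Cauchy integral formula for $f$ evaluated at the point $z_0 = x+jy$, integrated over a closed contour consisting of a large semicircle of radius $R$ in $\Complex^+$ together with the real axis. Because $f\to 0$ as $|z|\to\infty$ in $\Complex^+$, the contribution of the large arc vanishes as $R\to\infty$ (a Jordan-lemma type estimate, using only the decay hypothesis since there is no oscillatory factor here one must be slightly careful and note that decay of $f$ itself suffices because the Cauchy kernel $1/(z-z_0)$ already contributes a $1/R$ factor and the arc length is $\pi R$, so one needs $f$ to go to zero, which is assumed). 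This gives $f(z_0) = \frac{1}{2\pi j}\,\mathrm{p.v.}\!\int_{-\infty}^{\infty}\frac{f(x')}{x'-z_0}\,\der x'$ when $z_0$ is in the open upper half-plane, with no extra factor of $1/2$ since the pole is interior, not on the contour.

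Next I would take the limit $y\to 0^+$ so that $z_0\to x$ on the real axis. The Sokhotski--Plemelj formula gives $\frac{1}{x'-x-j0^+} = \mathrm{p.v.}\frac{1}{x'-x} + j\pi\,\delta(x'-x)$, so that $f(x) = \frac{1}{2\pi j}\,\mathrm{p.v.}\!\int \frac{f(x')}{x'-x}\,\der x' + \tfrac12 f(x)$, whence $f(x) = \frac{1}{\pi j}\,\mathrm{p.v.}\!\int_{-\infty}^{\infty}\frac{f(x')}{x'-x}\,\der x' = -\frac{j}{\pi}\,(\text{p.v.})\!\int\frac{f(x')}{x'-x}\der x'$. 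Writing $f = u + jv$ and separating real and imaginary parts of this single complex identity immediately yields $u(x,0) = \mathcal H_x(v(x,0))$ and $v(x,0) = -\mathcal H_x(u(x,0))$, which is \eqref{eq:hilbert-transform-eqs} on the boundary. For the statement at interior points $(x,y)$ with $y>0$, I would note that $u(\cdot,y)$ and $v(\cdot,y)$ are the real and imaginary parts of the boundary values of the function $g(w) \eqdef f(w+jy)$, which is itself analytic in $\Complex^+$ and still decays at infinity; applying the boundary result to $g$ recovers the relations for each fixed $y$.

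The main obstacle I anticipate is justifying the interchange of limit and principal-value integral, and more precisely the decay/integrability bookkeeping: the hypothesis ``$f(z)\to 0$ as $|z|\to\infty$'' is exactly what kills the semicircular arc, but to make the principal-value integral over $\Reals$ converge and to justify the Sokhotski--Plemelj passage one really wants a little more (e.g.\ $f|_{\Reals}$ locally integrable with enough decay, or H\"older continuity near $x$); I would either invoke the standard Hardy-space / $H^p(\Complex^+)$ framework where these relations are classical, or simply state the regularity hypotheses under which the manipulations are valid and remark that in our application $\log|a(\lambda)|$ and $\angle a(\lambda)$ satisfy them because $a(\lambda)$ is analytic and bounded with $|a|\ge 1$ in the defocusing regime (so $\log a$ is analytic and well-behaved). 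The rest --- Cauchy's formula, the arc estimate, splitting into real and imaginary parts --- is routine. I would present it compactly, emphasizing the one-line complex identity $f = -\frac{j}{\pi}\,\mathcal H_x(f)$ on the boundary and reading off \eqref{eq:hilbert-transform-eqs} from it.
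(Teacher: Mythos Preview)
Your argument is correct and follows the same semicircle/Sokhotski--Plemelj/split-into-real-and-imaginary-parts scheme as the paper. The only organizational difference is that the paper places the diameter of the semicircular contour on the line $\operatorname{Im}\zeta=y$ (with a small indentation around $z=x+jy$) and applies Sokhotski--Plemelj directly there, obtaining
\[
f(x+jy)=\frac{1}{\pi j}\,\textnormal{p.v.}\int_{-\infty}^{\infty}\frac{f(x'+jy)}{x'-x}\,\der x'
\]
in one step for every $y>0$. You instead anchor the contour on the real axis, first pass to the boundary $y\to 0^+$, and then recover the interior case by applying the boundary result to the shifted function $g(w)=f(w+jy)$. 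Both routes are valid; the paper's is slightly shorter because it avoids the detour through $y=0$ and the subsequent shift, while yours makes the role of the boundary identity more explicit and separates the Sokhotski--Plemelj step from the translation. Neither buys anything substantive over the other.
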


\begin{proof}
The result follows immediately from the Cauchy's integral formula, or 
the Sokhotski-Plemelj formula \cite[Lem.~8]{yousefi2012nft1}.

\end{proof}

\begin{lemma}
\label{cor:kramers-kronig}
Let $q(t)\in L^1(\Reals)$ and $a(\lambda)$  be the corresponding 
nonlinear Fourier coefficient in the defocusing regime. Then
\begin{IEEEeqnarray}{rCl}
\angle(a(\lambda))=\mathcal H(\log|a(\lambda)|),
\label{eq:|a|-via-angle(a)}
\end{IEEEeqnarray}
for all $\lambda\in\Reals$ for which $\angle(a(\lambda))\in(-\pi,\pi)$. 

\qed
\end{lemma}

\begin{proof}

If $q(t)\in L^1(\Reals)$ and $s=1$,  $a(\lambda)$ can be analytically 
extended to $\Complex^+$ and is 
continuous in $\bar \Complex^+\eqdef\bigl\{\lambda\in\Complex: \Im(\lambda)\geq 0\bigr\}$
\cite[Lemma~2.1]{ablowitz2003dcn}, \cite[Lem.~4]{yousefi2012nft1}. 
Consider the open region $\mathcal{D}\eqdef\{\lambda\in\Complex: 0<\Im(\lambda)<\epsilon\}$ with 
$\epsilon\rightarrow 0^+$. 
Note $a(\lambda)$ is analytic on $\mathcal{D}$ and continuous on its closure $\bar{\mathcal D}$. 
From the unimodularity condition \eqref{eq:C-unimodularity} with $s=1$
\begin{IEEEeqnarray*}{rCl}
  |a|^2&=&1+|b|^2\\
 &\geq & 1,\quad \lambda\in\Reals,
\end{IEEEeqnarray*}
thus $|a(\lambda)|\neq 0$ for $\lambda\in\Reals$. Because $a(\lambda)$ is continuous on $\bar{\mathcal D}$,
we obtain $a(\lambda)\neq 0$ on $\bar{\mathcal{D}}$ (recall $\epsilon\rightarrow 0$).
Taking the principal branch of the logarithm, function $\lambda\mapsto\log(\lambda)$ is continuous 
in $\Complex\backslash\{0\}$ and analytic on $\Complex\backslash\Realsnp$, where $\Realsnp$ 
is a branch cut of the logarithm. 
Since $a(\lambda)$ is analytic and non-zero on $\mathcal{D}$, and $\textnormal{range}(a(\lambda)) \subseteq\Complex\backslash\Realsnp$, 
$\log a(\lambda)$ is analytic on $\mathcal D$, and continuous on $\bar{\mathcal{D}}$.

We next show that $\log a(\lambda)$ satisfies the decay 
assumption in  Lemma~\ref{lemm:kramers-kronig}.
From \cite[Eq. 45]{yousefi2012nft1}, 
\begin{IEEEeqnarray*}{rCl}
  a(\lambda)=1-\frac{jE}{2}\lambda^{-1}+\mathcal O(\lambda^{-2}),\quad \textnormal{as}\quad |\lambda|\rightarrow\infty,
\end{IEEEeqnarray*}
where $E\eqdef\int\limits_{-\infty}^{\infty}|q(t)|^2\der t$.
Thus
\begin{IEEEeqnarray}{rCl}
|\log a(\lambda)|\rightarrow 0.5 E\lambda^{-1},\quad \textnormal{as},\quad |\lambda|\rightarrow\infty.
\label{eq:loga-inf}
\end{IEEEeqnarray}
Applying Lemma~\ref{lemm:kramers-kronig} on $\mathcal D$ gives
\begin{IEEEeqnarray}{rCl}
\angle (a(\lambda))=\mathcal H(\log|a(\lambda)|),\quad \lambda\in\mathcal D.
\label{eq:|a|-angle}
\end{IEEEeqnarray}

We show \eqref{eq:|a|-angle} holds for $\lambda\in\Reals$ using continuity. 
For $x\in\Reals$
\begin{IEEEeqnarray*}{rCl}
\angle(a(x))&=&
\angle\bigl(a(\lim\limits_{\epsilon\rightarrow 0^+}(x+j\epsilon))\bigr)
\\
&\overset{(a)}{=}&
\angle\bigl(
 \lim\limits_{\epsilon\rightarrow 0^+} a(x+j\epsilon)\bigr)
\\&\overset{(b)}{=}&\lim\limits_{\epsilon\rightarrow 0^+} \angle(a(x+j\epsilon))
\\
&\overset{(c)}{=}&
œ\lim\limits_{\epsilon\rightarrow 0^+}
\mathcal{H}\bigl(\log|a(x+j\epsilon)|\bigr)
\\
&\overset{(d)}{=}&
\mathcal{H}\Bigl(
\lim\limits_{\epsilon\rightarrow 0^+}\log|a(x+j\epsilon)|\Bigr)
\\
&\overset{(e)}{=}&
\mathcal{H}(\log|a(x)|).
\label{eq:|a|-via-angle(a)-2}
\end{IEEEeqnarray*}
Step $(a)$ follows because $a(\lambda)$ is continuous in $\bar{\mathcal{D}}$.
Step $(b)$ follows because $a(\lambda)$ is continuous and non-zero in $\bar{\mathcal{D}}$, 
and $\lambda\mapsto\log(\lambda)$ is
continuous in $\Complex\backslash\{0\}$, thus $\angle(a)\in(-\pi,\pi)$ is continuous 
in its principal branch.
Step $(c)$ follows from \eqref{eq:|a|-angle}. 
Step $(d)$ holds because $\mathcal H$ is a continuous operator on the
space of functions $\log|a(\lambda)|$, as we show below.
Step $(e)$ holds because $\log|a(\lambda)|$ is continuous on
$\bar{\mathcal{D}}$.

We show that the order of the limit and the integral can be exchanged in the following Hilbert transform: 
\begin{IEEEeqnarray*}{rCl}
\angle(a(x))=
\frac{1}{\pi}
\lim\limits_{\epsilon\rightarrow 0^+}
\textnormal{p.v.}
\int\limits_{-\infty}^{\infty}
\frac{\log|a(x'+j\epsilon)|}
{x-x'}\der x'.
\end{IEEEeqnarray*}
Let $\delta>0$ be sufficiently small and $K<\infty$ sufficiently large. We partition the integration range into the union of 
$I_0\eqdef\{ x' :  |x'-x|<\delta\}$, $I_1\eqdef\{ x': |x'|\leq K,\: x'\notin I_0\}$ and 
$I_2\eqdef\{ x': |x'|>K\}$. From \eqref{eq:loga-inf}, there exists $C_2<\infty$ 
such that
\begin{IEEEeqnarray}{rCl}
\left|
  \frac{\log|a(x'+j\epsilon)|}{x'-x}\right|<\frac{C_2}{|x'(x'-x)|},
\quad x'\in I_2.
\label{eq:loga-ub}
\end{IEEEeqnarray}
The right hand side in \eqref{eq:loga-ub} is independent of $\epsilon$
and integrable for $x'\in I_2$.
Likewise, because $a(\lambda)$ is continuous on $\bar{\mathcal D}$ and $I_1$ is compact,
from the extreme value theorem $|a(x'+j\epsilon)|<C_1$ for $x'\in I_1$ and some $C_1<\infty$, independently of $\epsilon$. Thus  
the integrand corresponding to $I_1$ is uniformly upper bounded by an absolutely integrable function. 
Consequently, from the dominated convergence theorem the order of the limit and the integral 
can be exchanged in integrals over $I_1$ and $I_2$.  Finally, since
$\log|a(\lambda)|$ is continuous,
the integral over $I_0$ evaluates to
$\log |a(x)|\times\textnormal{p.v.}\int_{-\delta}^{\delta}\frac{1}{x}\der
x=0$.

\end{proof} 

\begin{remark}
If $q(t)\in L^1(\Reals)$, $a(\lambda)$ is analytic on $\Complex^+$, but not necessarily on $\bar\Complex^+$, unless 
$q(t)$  vanishes exponentially as $t\rightarrow \pm\infty$. 
For such signals, $\log a(\lambda)$ is analytic on $\bar\Complex^+$, and the proof of Lemma~\ref{cor:kramers-kronig} 
simplifies.

\qed
\end{remark}




\end{document}